%
\documentclass[runningheads]{llncs}
\usepackage[T1]{fontenc}
%
\usepackage{amssymb, bussproofs, amsmath, datetime}
\usepackage{color}
\usepackage{proof}
\usepackage[lutzsyntax]{virginialake}
\usepackage{tikz}
\usepackage{listings}
\lstset{basicstyle=\ttfamily,
backgroundcolor=\color{grey},
breaklines=true,
mathescape=true,
texcl=true,
keepspaces}
\usepackage{enumerate}
\usepackage{bm}
\usepackage{hyperref}

\definecolor{grey}{rgb}{0.9, 0.9, 0.9}

\renewcommand{\qed}{\hfill$\blacksquare$}

\newcommand{\der}{derivation}

\newcommand{\prf}{proof}
\newcommand{\prfs}{proofs}

\newcommand{\prov}{provable}

\newcommand{\sys}[1]{\mathsf{#1}}
\newcommand{\Ra}{\Rightarrow}
\DeclareMathOperator{\slex}{<\!\!<}
\newcommand{\imp}{\rightarrow}
\renewcommand{\rule}[1]{(\text{#1})}

\newcommand{\goodbox}{\hspace{-.6ex}\text{
  \tikz[baseline=-.6ex, rounded corners=.01ex, line width=.1ex]
    {\draw (-.6ex,-.6ex) rectangle (.6ex,.6ex);}}\kern.2ex}
\newcommand{\gooddiamond}{\hspace{-.6ex}\text{
  \tikz[baseline=-.6ex, rounded corners=.01ex, rotate=45, line width=.1ex]
    {\draw (-.5ex,-.5ex) rectangle (.5ex,.5ex);}}\kern.2ex}
\renewcommand{\Box}{\goodbox}
\renewcommand{\Diamond}{\gooddiamond}

\begin{document}
\title{A new calculus for intuitionistic Strong L\"ob logic:
strong termination and cut-elimination, formalised}
\titlerunning{A new calculus for intuitionistic Strong L\"ob logic}
%
\author{Ian Shillito\inst{1} \and
Iris van der Giessen\inst{2} \and
Rajeev Gor{\'{e}}\inst{3,4} \and
Rosalie Iemhoff \inst{5}
}
\authorrunning{I. Shillito et al.}
%
\institute{Australian National University, Ngunnawal Country, Australia \and
University of Birmingham, United Kingdom \and
Technical University of Vienna, Austria \and
Polish Academy of Science, Poland \and
Utrecht University, The Netherlands}
\maketitle 
\begin{abstract}
We provide a new sequent calculus that enjoys syntactic cut-elimination and strongly terminating backward proof search for the intuitionistic Strong L\"ob logic $\sys{iSL}$, an intuitionistic modal logic with a provability interpretation.
A novel measure on sequents is used to prove both the termination of the naive backward proof search strategy, and the admissibility of cut in a syntactic and direct way, leading to a straightforward cut-elimination procedure.
All proofs have been formalised in the interactive theorem prover Coq.

\keywords{Intuitionistic provability logic, Cut-elimination, Backward proof search, Interactive theorem proving, Proof theory.}
\end{abstract}


\section{Introduction}

G\"odel-L\"ob logic~$\sys{GL}$ extends classical modal logic $\sys{K}$ with the G\"odel-L\"ob axiom
$\Box (\Box \varphi \imp \varphi) \imp \Box \varphi$. $\sys{GL}$ is the provability logic of Peano
Arithmetic $\sys{PA}$, i.e.~it consists of all modal formulas that are true under any arithmetical interpretation where
$\Box\varphi$ means ``$\varphi$ is provable in $\sys{PA}$'' (expressed in the language of~$\sys{PA}$).

An intuitionistic version of $\sys{GL}$ is $\sys{iGL}$ and the intuitionistic counterpart of~$\sys{PA}$ is Heyting Arithmetic
$\sys{HA}$. For a long time, the provability logic of $\sys{HA}$ was an open problem and was only known to be an extension of~$\sys{iGL}$.
However, Mojtahedi claims to have found a solution in a preprint~\cite{Mojtahedi22} currently under review.

Several other logics also have provability interpretations, such as modalised Heyting calculus~$\sys{mHC}$, Kuznetsov-Muravitsky logic~$\sys{KM}$, and
intuitionistic Strong L\"ob logic~$\sys{iSL}$ \cite{Esakia06,KuznetsovMuravitsky86,Muravitsky14,Lit14}. All these intuitionistic modal logics except $\sys{mHC}$ include the G\"odel-L\"ob axiom and all except $\sys{iGL}$ contain the so-called completeness axiom $\varphi \to \Box \varphi$.

Important to note is that these logics are defined over the language with only the $\Box$-modality and without $\Diamond$. In classical modal logic, $\Diamond$ is dual to $\Box$ and reads as consistency in the provability interpretation. However, for intuitionistic modal logics, in general, $\Diamond$ and $\Box$ are not interdefinable and several choices can be made. Interestingly, intuitionistic modal logics defined over the language with only the $\Box$ already reveal intrinsic intuitionistic characters. Important for us is the aforementioned completeness principle, also known as the coreflection principle. It trivializes in a classical setting, but has interesting intuitionistic readings. 
Indeed, in our setting of provability, $\varphi \rightarrow \Box \varphi$ reads as completeness: ``if~$\varphi$ is true then $\varphi$ is provable'' (see \cite{Visser82} for a discussion on the completeness principle in extensions of Heyting Arithmetic). The coreflection principle also appears in intuitionistic epistemic logic and lax logic (for overviews see, e.g., \cite{Lit14,Gie22}).

Here, we consider $\sys{iSL}$, the minimal intuitionistic modal logic with both the G\"odel-L\"ob axiom and the completeness axiom, which can also 
be axiomatised over intuitionistic modal logic $\sys{iK}$ by the Strong L\"ob axiom
$(\Box\varphi \to \varphi) \to \varphi$. The logic $\sys{iSL}$ is the provability logic of an extension of Heyting Arithmetic with
respect to so-called slow provability~\cite{VisserZoethout18} and plays an important role in the $\Sigma_1$-provability logic
of $\sys{HA}$~\cite{ArdMoj18}.

The G\"odel-L\"ob axiom characterises transitive converse well-founded Kripke frames for $\sys{GL}$ and also for
the birelational frames for $\sys{iGL}$, $\sys{iSL}$, and $\sys{KM}$. Interestingly, for $\sys{iSL}$, $\sys{mHC}$, and $\sys{KM}$,
the modal relation is a part of the intuitionistic relation.
This semantics plays an important role in the study of $\sys{iSL}$, e.g.~in the characterisation of its admissible rules \cite{Gie23}. A natural deduction system for $\sys{iSL}$ can be found in \cite{Bro22}. The proof systems that we focus on here are sequent calculi.

From a proof-theoretic perspective, the ``diagonal formula'' $\Box \varphi$ in the modal \rule{GLR} rule for $\sys{GL}$ 
causes difficulties for direct cut-elimination because the standard induction on the
size of the cut-formula and the height fail. Cut-elimination is highly
nontrivial as witnessed by decades of unsuccessful attempts and controversies before the proof by Valentini \cite{Valentini83} was
finally shown to be correct \cite{GoreRamanayake12}.

\begin{center}
\begin{math}
\begin{array}{c@{\extracolsep{1cm}}c}
    \AxiomC{$\Gamma, \Box \Gamma, \Box \varphi \Rightarrow \varphi$}
    \RightLabel{\rule{GLR}}
    \UnaryInfC{$\Phi, \Box \Gamma \Rightarrow \Box \varphi, \Delta$}
    \DisplayProof
  &
    \AxiomC{$\Gamma, \varphi \rightarrow \psi \Rightarrow \varphi$}
    \AxiomC{$\Gamma, \psi \Ra \varphi$}
    \RightLabel{\rule{$\rightarrow\!\text{L}_{\text{i}}$}}
    \BinaryInfC{$\Gamma, \varphi\rightarrow\psi \Ra \varphi$}
    \DisplayProof
\end{array}
\end{math}
\end{center}

In backward proof search, the \rule{GLR} rule causes loops because $\Box\Gamma$ is preserved upwards from conclusion to premise.
For \rule{GLR}, a simple terminating and complete strategy consists in applying \rule{GLR} only if $\Box\varphi\not\in\Box\Gamma$.
In sequent calculi for intuitionistic logic, the traditional \rule{$\rightarrow\!\text{L}_{\text{i}}$} rule, shown above right, can cause backward proof search to go into loops. 
For termination without loop check, various authors have independently discovered the sequent calculus $\sys{G4ip}$ which replaces the \rule{$\rightarrow\!\text{L}_{\text{i}}$} rule with multiple rules, depending on the form of $\varphi$~\cite{Dyc16}. Iemhoff~\cite{Iemhoff22} developed $\sys{G4}$-like calculi for several intuitionistic modal logics.

Thus, in a sequent calculus for an intuitionistic provability logic, both the modal rule and left implication rule have the potential
to cause loops \emph{and} the modal rule can complicate direct cut-elimination! For logic $\sys{iGL}$, van der Giessen and Iemhoff have developed $\sys{G3iGL}$ and $\sys{G4iGL}$~\cite{GieIem21}, providing a direct cut-elimination procedure for the former. The initial proof of cut-elimination for $\sys{G4iGL}$ was indirect, via $\sys{G3iGL}$, but Gor\'e
and Shillito later formalised direct cut-elimination using the maximal height of derivations as induction
parameter~\cite{GorShi22}.

Recently, van der Giessen and Iemhoff~\cite{GieIem23} developed two sequent calculi, $\sys{G3iSL}$ and $\sys{G4iSL}$, for
$\sys{iSL}$ for which they provided the analogue results compared to $\sys{G3iGL}$ and $\sys{G4iGL}$ mentioned above. In particular, they show that backward proof search in $\sys{G4iSL}$ \emph{weakly}
  terminates: \emph{there exists} a terminating (and complete) backward proof search strategy, namely one similar to the above-described for logic~$\sys{GL}$. However, \emph{not all} strategies terminate on this calculus: the naive backward proof search strategy, apply any rule in any order, does not.

Here, we present $\sys{G4iSLt}$ which replaces the $\sys{G4iSL}$ rules of the top row below, by the rules in the bottom row.
As suggested by van der Giessen and Iemhoff~\cite{GieIem23}, the new modal rule drops the explicit embedding of transitivity. But crucially, the new left-implication rule drops both transitivity and contraction on $\Box\varphi\imp\psi$ in the left premise. The right premise $S=\Phi,\Box\Gamma,\psi\Ra\chi$ is kept untouched:

\begin{center}
\begin{math}
\begin{array}{c@{\extracolsep{2cm}}c}
    \AxiomC{$\Phi, \Gamma, \Box \Gamma, \Box \varphi \Rightarrow \varphi$}
    \UnaryInfC{$\Phi, \Box \Gamma \Rightarrow \Box \varphi$}
    \DisplayProof \vspace{1em}
  &
  \AxiomC{$\Phi,\Gamma,\Box\Gamma,\Box\varphi\rightarrow\psi,\Box\varphi\Ra\varphi$}
    \AxiomC{$S$}
    \BinaryInfC{$\Phi,\Box\Gamma,\Box\varphi\rightarrow\psi\Ra\chi$}
    \DisplayProof
    \\
    \AxiomC{$\Phi, \Gamma, \Box \varphi \Rightarrow \varphi$}
    \UnaryInfC{$\Phi, \Box \Gamma \Rightarrow \Box \varphi$}
    \DisplayProof
  &
      \AxiomC{$\Phi,\Gamma,\psi,\Box\varphi\Ra\varphi$}
    \AxiomC{$S$}
    \BinaryInfC{$\Phi,\Box\Gamma,\Box\varphi\rightarrow\psi\Ra\chi$}
    \DisplayProof 
\end{array}
\end{math}
\end{center}

Our results improve on the work of van der Giessen and Iemhoff~\cite{GieIem23}.
First, our new measure ensures that the naive backward proof search strategy for our new calculus terminates.
This is unusual for sequent calculi for provability logics, and especially for intuitionistic provability logics.
Second, we prove direct cut-elimination for $\sys{G4iSLt}$ using a proof technique similar to the \emph{mhd proof technique}~\cite{Brighton16,GorRamShi21}.
Third, all our results are formalised in Coq and can be found here: \url{https://ianshil.github.io/G4iSLT}. We consequently contribute to the rapidly growing literature of formalised proof theory \cite{DAbDawGor21,ChaLimRei19,DawGor10,FeGoo23,Gat22,GorRamShi21,GorShi22,Shi23}.
We also think that our work sheds light on what one might call proof-theoretic meta considerations. Namely, it shows the subtle consequences of rule choices on termination and cut-elimination.

In Section~\ref{sec:preliminaries}, we introduce the preliminaries of $\sys{iSL}$, including our calculus $\sys{G4iSLt}$.
Section~\ref{sec:admrules} presents the admissibility of structural rules in $\sys{G4iSLt}$.
In Section~\ref{sec:termination}, we prove that backward proof search in $\sys{G4iSLt}$ strongly terminates.
Finally, in Section~\ref{sec:cutelimination}, we directly prove cut-admissibility for $\sys{G4iSL}$ using a proof technique similar to the \emph{mhd proof technique}~\cite{Brighton16,GorRamShi21}.

\section{Preliminaries}\label{sec:preliminaries}

In this section we successively present the syntax, axiomatic system, Kripke semantics and sequent calculus for the logic $\sys{iSL}$.

\subsection{Syntax}

Let~$\mathbb{V}=\{p,q,r\dots\}$ be a countably infinite set of propositional variables on which equality is decidable,
that is $\forall p , q \in \mathbb{V}$, we can decide whether $p=q$ or-else $p \neq q$. Modal formulae are defined using BNF notation as below:
\[
  \varphi ::= p\in\mathbb{V} \mid \bot \mid \varphi \land \varphi \mid \varphi
  \lor \varphi \mid \varphi \imp \varphi \mid \Box \varphi
\]
We use the greek letters $\varphi,\psi,\chi,\delta,\dots$ for formulae and $\Gamma,\Delta,\Phi,\Psi\dots$ for multisets of formulae.
We say that $\varphi$ is a \textit{boxed formula} if $\Box$ is its main connective.
For a multiset $\Gamma$, we define the multiset $\Box\Gamma := \{\Box\varphi: \varphi\in\Gamma\}$. By the unboxing of a multiset $\Box\Gamma$ we mean the multiset $\Gamma$.

Following Gor\'{e} et al.~\cite{GorRamShi21,GorShi22}, we encode formulae as an inductive type \lstinline{MPropF} whose base case encodes $\mathbb{V}$ as the type \lstinline{nat} of natural numbers because \lstinline{nat} is countably infinite and equality is decidable on it.
A list of such formulae then has the type \lstinline{list MPropF}. The usual operations on lists ``append'' and ``cons'' are respectively represented by \lstinline!++! and  \lstinline!::! but Coq also allows us to write lists in infix notation using \lstinline{;}. Thus the terms \lstinline{$\varphi$1 :: $\varphi$2 :: $\varphi$3 :: nil} and \lstinline{[$\varphi$1] ++ [$\varphi$2] ++ [$\varphi$3]} and \lstinline{[$\varphi$1 ; $\varphi$2 ; $\varphi$3]} all encode the list $\varphi_1, \varphi_2, \varphi_3$.

We straightforwardly extend Dyckhoff's notion of weight of a formula \cite{Dyc92}, defined for the intuitionistic language, to the modal language.

\begin{definition}
The \textit{weight} $w(\varphi)$ of a formula $\varphi$ is defined as follows:
\begin{center}
\begin{tabular}[c]{r c l}
$w(\bot)=w(p)$ & = & $1$
 \\
$w(\psi\lor\chi)=w(\psi\rightarrow\chi)$ & = & $w(\psi) + w(\chi) + 1$
\\
$w(\psi\land\chi)$ & = & $w(\psi) + w(\chi) + 2$
\\
$w(\Box \psi)$ & = & $w(\psi) + 1$
\\
\end{tabular}
\end{center}
\end{definition}

The main motivation behind this weight is to ensure that $w(\varphi\rightarrow(\psi\rightarrow\chi))<w((\varphi\land\psi)\rightarrow\chi)$, which is crucial to show termination of naive backward proof search on the sequent calculus $\sys{G4ip}$ for intuitionistic logic.

\subsection{Axiomatic systems as consequence relations}

Traditional Hilbert calculi are designed to capture logics as sets of theorems, that is sets of the form $\{\varphi:\;\vdash\varphi\}$. However, when considering logics as consequence relations these systems are inadequate, and notably lead to historical confusions about properties such as the deduction theorem \cite{GorShi20,HakNeg12}.

Generalised Hilbert calculi manipulate expressions $\Gamma\vdash\varphi$, where $\Gamma$ is a set of formulae.
They clearly distinguish between the notion of deducibility from a set of assumptions, versus theoremhood.
They are particularly useful for identifying the appropriate form of deduction theorem holding for a logic~\cite{GorShi20}.
Still, they correspond to traditional Hilbert calculi when restricted to consecutions of the shape $\emptyset\vdash\varphi$, as we do here. Thus, we can connect the generalised Hilbert calculus here to the traditional Hilbert calculus considered by Ardeshir and Mojtahedi~\cite{ArdMoj18}.

The generalised Hilbert calculus $\sys{iSLH}$ for~$\sys{iSL}$, shown in Figure~\ref{fig:Hilbert}, extends the one for intuitionistic modal logic~$\sys{iK}$ with the Strong L{\"{o}}b axiom~$(\Box\varphi\imp\varphi)\imp\varphi$. We write $\Gamma\vdash_{\sys{iSLH}}\varphi$ if $\Gamma\vdash\varphi$ is provable in $\sys{iSLH}$.

Note that if we replace the premise of the rule \rule{Nec} by $\Gamma\vdash\varphi$ we obtain an equivalent calculus. This is implied by the completeness axiom $\varphi\imp\Box\varphi$ and the holding of the deduction theorem in $\sys{iSLH}$ \cite{Gie22}.

\begin{figure}[t]
\begin{center}
\begin{tabular}{c}
\textbf{Axioms}\\
\end{tabular}
\end{center}
\vspace{-0.8cm}

\begin{center}
\begin{tabular}{l l l l}
$A_{1}$ & $\varphi\imp(\psi\imp\varphi)$ & $A_{7}$ & $(\varphi\land\psi)\imp\psi$ \\
$A_{2}$ & $(\varphi\imp(\psi\imp\chi))\imp((\varphi\imp\psi)\imp(\varphi\imp\chi)))$ & $A_{8}$ &  $(\varphi\rightarrow\psi)\rightarrow((\varphi\rightarrow\chi)\rightarrow(\varphi\rightarrow(\psi\land\chi)))$ \\
$A_{3}$ & $\varphi\rightarrow(\varphi\lor\psi)$ & $A_{9}$ & $\bot\rightarrow\varphi$ \\
$A_{4}$ & $\psi\rightarrow(\varphi\lor\psi)$ & $A_{10}$ & $\Box(\varphi\rightarrow\psi)\rightarrow(\Box\varphi\rightarrow\Box\psi)$ \\
$A_{5}$ & $(\varphi\rightarrow\chi)\rightarrow((\psi\rightarrow\chi)\rightarrow((\varphi\lor\psi)\rightarrow\chi))$ & $A_{11}$ & $(\Box\varphi\rightarrow\varphi)\rightarrow\varphi$ \\
$A_{6}$ & $(\varphi\land\psi)\imp\varphi$ & & \\
\end{tabular}
\end{center}
\vspace{-0.7cm}

\begin{center}
\begin{tabular}{c}
\textbf{Rules of Inference}\\
\end{tabular}
\end{center}
\vspace{-0.5cm}

\begin{center}
\begin{tabular}{c@{\hspace{1.5cm}}c}
$
\inferLineSkip=3pt
\infer[\scriptstyle\rule{Ax}]{\Gamma\vdash\varphi}{\varphi\text{ is an instance of an axiom }}
$ & 
$
\inferLineSkip=3pt
\infer[\scriptstyle\rule{El}]{\Gamma\vdash\varphi}{\varphi\in\Gamma}
$ \\
 & \\
$
\inferLineSkip=3pt
\infer[\scriptstyle\rule{Nec}]{\Gamma\vdash\Box\varphi}{\emptyset\vdash\varphi}
$ &
$
\inferLineSkip=3pt
\infer[\scriptstyle\rule{MP}]{\Gamma\vdash\psi}{
	\Gamma\vdash\varphi
	&
	\Gamma\vdash\varphi\rightarrow\psi}
$
\end{tabular}
\end{center}
\vspace{-0.4cm}
\caption{Generalised Hilbert calculus $\sys{iSLH}$ for $\sys{iSL}$}
\label{fig:Hilbert}
\end{figure}

\subsection{Kripke semantics}

We now present the Kripke semantics for $\sys{iSL}$ \cite{Lit14,ArdMoj18} to notably prove soundness of our sequent calculus $\sys{G4iSLt}$, and explain its rules \rule{SLtR} and \rule{$\Box\!\imp$L}. 

The Kripke semantics of $\sys{iSL}$ is a restriction of the Kripke semantics for intuitionistic modal logics. More precisely, the semantic interpretation of connectives is preserved, but the class of models is restricted. The models for this logic are defined below, where for a set $W$, we write $\mathcal{P}(W)$ for the set of all subsets of $W$.

\begin{definition}\label{model}
A \emph{Kripke model}~$\mathcal M$ for $\sys{iSL}$ is a tuple $(W,\leq,R,I)$, where $W$ is a non-empty set (of possible worlds), both $\leq$ (the intuitionistic relation) and $R$ (the modal relation) are subsets of $W\times W$, and $I:\mathbb V\rightarrow\mathcal P(W)$, which satisfies the following:
$\leq$ is reflexive and transitive;
$R$ is transitive and converse well-founded;
$(\leq \circ R)\,\subseteq\,R$ where ``$\circ$''
is relational composition; 
$R\,\subseteq\,\leq$;
and for all $p\in\mathbb V$ and $w,v\in W$, if $w\leq v$ and $w\in I(p)$ then $v\in I(p)$.
\end{definition}

Note the peculiarity of the models for $\sys{iSL}$: $R\,\subseteq\,\leq$, that is the modal relation is a subset of the intuitionistic relation. We recall the standard definition of forcing for intuitionistic modal logics, and show that persistence holds.

\begin{definition}\label{DefForcMod}
  Given a Kripke model~$\mathcal M=(W,\leq,R,I)$, we define the forcing relation as follows, where $v \geq w$ is just $w \leq v$:
\begin{center}
\begin{tabular}{l @{\extracolsep{1em}} c l}
$\mathcal M,w\Vdash p$ & if & $w\in I(p)$\\
$\mathcal M,w\Vdash\bot$ & & never \\
$\mathcal M,w\Vdash\varphi\land\psi$ & if & $\mathcal M,w\Vdash\varphi$ and $\mathcal M,w\Vdash\psi$\\
$\mathcal M,w\Vdash\varphi\lor\psi$ & if & $\mathcal M,w\Vdash\varphi$ or $\mathcal M,w\Vdash\psi$\\
$\mathcal M,w\Vdash\varphi\rightarrow\psi$ & if & $\forall v \geq w.\; \mathcal M,v\Vdash\varphi$ implies $\mathcal M,v\Vdash\psi$\\
$\mathcal M,w\Vdash\Box\varphi$ & if & $\forall v \in W.\; wRv$ implies $\mathcal M,v\Vdash\varphi$\\
\end{tabular}
\end{center}
Local consequence is as below where $\mathcal M, w \Vdash \Gamma$ means $\forall \varphi \in \Gamma, \mathcal M, w \Vdash \varphi$:
\begin{center}
\begin{tabular}{l@{\hspace{1cm}}c@{\hspace{1cm}}l}
$\Gamma\models\varphi$ & iff & $\forall\mathcal M.\forall w.\,(\mathcal M,w\Vdash\Gamma\;\;\;\text{implies}\;\;\;\mathcal M,w\Vdash\varphi)$\\
\end{tabular}
\end{center}
\end{definition}

\begin{lemma}[Persistence]\label{lem:pers}
For any model $\mathcal M=(W,\leq,R,I)$, formula $\varphi$ and points $w,v\in W$, if $w\leq v$ and $\mathcal M,w\Vdash\varphi$ then $\mathcal M,v\Vdash\varphi$.
\end{lemma}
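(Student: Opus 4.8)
The plan is to proceed by structural induction on $\varphi$. Fix a model $\mathcal M=(W,\leq,R,I)$ and points $w,v\in W$ with $w\leq v$, assume $\mathcal M,w\Vdash\varphi$, and derive $\mathcal M,v\Vdash\varphi$. For the base cases: if $\varphi=\bot$ there is nothing to prove, since $\mathcal M,w\Vdash\bot$ never holds; if $\varphi=p\in\mathbb V$, then $w\in I(p)$, and the monotonicity clause on $I$ in Definition~\ref{model} together with $w\leq v$ gives $v\in I(p)$, i.e. $\mathcal M,v\Vdash p$.

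The conjunction and disjunction cases are immediate: from $\mathcal M,w\Vdash\psi\land\chi$ (resp. $\psi\lor\chi$) one applies the induction hypothesis to the one or two immediate subformulae forced at $w$ and reassembles. The two cases that actually use the frame conditions are implication and $\Box$, and the key point is that one should not try to apply the induction hypothesis at $v$ directly, but instead push the quantifier over successors back to successors of $w$. For $\varphi=\psi\imp\chi$: given any $u$ with $v\leq u$, transitivity of $\leq$ yields $w\leq u$, so the assumption $\mathcal M,w\Vdash\psi\imp\chi$ tells us that $\mathcal M,u\Vdash\psi$ implies $\mathcal M,u\Vdash\chi$; as $u$ was arbitrary, $\mathcal M,v\Vdash\psi\imp\chi$. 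For $\varphi=\Box\psi$: given any $u$ with $vRu$, the condition $(\leq\circ R)\subseteq R$ applied to $w\leq v$ and $vRu$ gives $wRu$, hence $\mathcal M,u\Vdash\psi$ by the assumption $\mathcal M,w\Vdash\Box\psi$; as $u$ was arbitrary, $\mathcal M,v\Vdash\Box\psi$.

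There is no genuine obstacle here; the only thing to keep track of is which structural property of the frame is invoked where — transitivity of $\leq$ for the $\imp$ clause and $(\leq\circ R)\subseteq R$ for the $\Box$ clause — and to observe that both of these cases actually go through \emph{without} the induction hypothesis, which is precisely why the induction is completely routine. In Coq this is a direct induction on the inductive type \lstinline{MPropF} of formulae, with the atomic case discharged by the persistence hypothesis built into the model and the $\imp$/$\Box$ cases by the corresponding hypotheses on $\leq$ and on $\leq\circ R$.
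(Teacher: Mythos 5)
Your proof is correct and is exactly the standard argument: structural induction on $\varphi$, using monotonicity of $I$ for atoms, transitivity of $\leq$ for implication, and the frame condition $({\leq}\circ R)\subseteq R$ (read as $w\leq v$ and $vRu$ imply $wRu$) for $\Box$. The paper does not display a proof of Lemma~\ref{lem:pers} (it is discharged in the Coq formalisation), but its formal proof proceeds by the same induction on the formula type, so your approach coincides with the paper's.
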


Interestingly, as $\sys{iSL}$ satisfies the finite model property \cite{VisserZoethout18} it can also be characterised by the class of \emph{finite} frames where $R$ is transitive and \emph{irreflexive}.

\subsection{Sequent calculus}

A \textit{sequent} is a pair of a finite multiset $\Gamma$ of formulae and a formula $\varphi$, denoted $\Gamma\Ra \varphi$.
For a sequent $\Gamma\Ra\varphi$ we call $\Gamma$ the \textit{antecedent} of the sequent and $\varphi$ the \textit{consequent} of the sequent.
For multisets~$\Gamma$ and~$\Delta$, the multiset sum $\Gamma \uplus \Delta$ is the multiset whose multiplicity (at each formula) is a sum of the multiplicities of~$\Gamma$ and~$\Delta$. We write $\Gamma,\Delta$ to mean $\Gamma \uplus \Delta$. For a formula~$\varphi$, we write $\varphi,\Gamma$ and $\Gamma,\varphi$ to mean $\{\varphi\} \uplus \Gamma$. From the formalisation perspective, a pair of a list of formulae \lstinline{(list MPropF)} and a formula \lstinline{MPropF} has type \lstinline{(list MPropF) * MPropF}, using the Coq notation \lstinline{*} for forming pairs. The latter is the type we give to sequents in our formalisation, for which we use the macro \lstinline{Seq}. Thus the sequent $\varphi_1, \varphi_2, \varphi_3 \Ra \psi$ is encoded by the term \lstinline{[$\varphi$1 ; $\varphi$2 ; $\varphi$3] * $\psi$}, which itself can also be written as the pair \lstinline{([$\varphi$1 ; $\varphi$2 ; $\varphi$3], $\psi$)}.
Note that \lstinline{[$\varphi$1 ; $\varphi$2 ; $\varphi$3] * $\psi$} is different from \lstinline{[$\varphi$2 ; $\varphi$1 ; $\varphi$3] * $\psi$}
since the order of the elements is crucial, so our lists do not capture multisets (yet).

A \textit{sequent calculus} consists of a finite set of \textit{sequent rule schemas}. 
Each rule schema consists of a conclusion sequent schema and some number of premise sequent schemas. 
A rule schema with zero premise schemas is called an initial rule.
The conclusion and premises are built in the usual way from propositional-variables, formula-variables and multiset-variables. 
A \textit{rule instance} is obtained by uniformly instantiating every variable in the rule schema with a concrete object of that type. 
This is the standard definition from structural proof theory.

\begin{definition}[Derivation/Proof]\label{def-sd}
A \emph{\der} of a sequent~$S$ in the sequent calculus~$\sys{C}$ is a finite tree of sequents such that
(i) the root node is~$S$; and
(ii) each interior node and its direct children are the conclusion and premise(s) of a rule instance in~$\sys{C}$.
A \emph{\prf} is a \der~where every leaf is the conclusion of an instance of an initial rule.
\end{definition}

Note that we explicitly define the notion of a derivation as an object rather than define the notion of derivability, as is done in some papers. We do so as we want to create a ``deep'' embedding of such derivations into Coq \cite{DawGor10}.

In what follows, it should be clear from context whether the word ``proof'' refers to the object defined in Definition \ref{def-sd}, or to the meta-level notion. We say that a sequent is \emph{\prov}~in~$\sys{G4iSLt}$ if it has a \prf~in~$\sys{G4iSLt}$.
We elide the details of the encodings of sequent rules and derivations, as these can be found elsewhere~\cite{DAbDawGor21,Shi23}.
We define a predicate \lstinline{G4iSLt_prv} on sequents to encode \textit{provability}~in $\sys{G4iSLt}$.
Our encodings rely on the type \lstinline{Type}, which bears computational content, unlike \lstinline{Prop}, and is crucially compatible with the extraction function of Coq.

Before presenting our calculus, we recall standard notions from proof theory.

\begin{definition}[Height]
For any \der~$\delta$, its \emph{height} $h(\delta)$ is the maximum number of nodes on a path from root to leaf.
\end{definition}

\begin{definition}[Admissibility, Invertibility, Height-Preservation]\label{GenDefAdmInv}
Let $\mathsf R$ be a rule schema with premises $S_0,\dots,S_n$ and conclusion $S$. We say that $\mathsf R$ is:
\begin{description}
\item[\rm admissible:] if for every instance of $\mathsf R$, the instance of $S$ is provable whenever the instances of $S_1,\dots,S_n$ are all provable; 
	
\item[\rm invertible:] if for every instance of $\mathsf R$, the instances of $S_1,\dots,S_n$ are all provable whenever the instance of $S$ is provable;
	
\item[\rm height-preserving admissible:] if for every instance of $\mathsf R$, if there are proofs $\pi_0,\dots,$ $\pi_n$ of the instances of $S_0,\dots,S_n$ then there is a proof $\pi$ of the instance of $S$ such that $h(\pi)\leq h(\pi_i)$ for some $0\leq i\leq n$; 

\item[\rm height-preserving invertible:] if for every instance of $\mathsf R$, if $\pi$ is a proof of the instance of $S$ then there are proofs $\pi_0,\dots,\pi_n$ of the instances of $S_0,\dots,S_n$ such that $h(\pi_i)\leq h(\pi)$ for all $0\leq i\leq n$.
\end{description}
\end{definition}

The sequent calculus~$\sys{G4iSLt}$ is given in Figure~\ref{fig:iseq-pc}. When defining rules we put the label naming of the rule on the left of the horizontal line, while the label appears on the right of the line in \textit{instances} of rules.

\begin{figure}[t]
\centering
{\small
  \begin{tabular}{l@{\hspace{1cm}}ll}
\AxiomC{}
\LeftLabel{$\scriptstyle\rule{$\bot$L}$}
\UnaryInfC{$\bot, \Gamma\Ra\chi$}
\DisplayProof
&
\AxiomC{}
\LeftLabel{$\scriptstyle\rule{IdP}$}
\UnaryInfC{$\Gamma,p\Ra p$}
\DisplayProof

\\[0.6cm]

\AxiomC{$\Gamma,\varphi,\psi\Ra\chi$}
\LeftLabel{$\scriptstyle\rule{$\land$L}$}
\UnaryInfC{$\Gamma,\varphi\land\psi\Ra\chi$}
\DisplayProof
&
\AxiomC{$\Gamma\Ra\varphi$}
\AxiomC{$\Gamma\Ra\psi$}
\LeftLabel{$\scriptstyle\rule{$\land$R}$}
\BinaryInfC{$\Gamma\Ra\varphi\land\psi$}
\DisplayProof

\\[0.6cm]

\AxiomC{$\Gamma,\varphi\Ra\chi$}
\AxiomC{$\Gamma,\psi\Ra\chi$}
\LeftLabel{$\scriptstyle\rule{$\lor$L}$}
\BinaryInfC{$\Gamma,\varphi\lor\psi\Ra\chi$}
\DisplayProof
&
\AxiomC{$\Gamma\Ra\varphi_i$}
\LeftLabel{$\scriptstyle\rule{$\lor$R$_i$}$}
\RightLabel{$(i \in \{1,2\})$}
\UnaryInfC{$\Gamma\Ra\varphi_1\lor\varphi_2$}
\DisplayProof

\\[0.6cm]
\AxiomC{$\Gamma,p,\varphi\Ra\chi$}
\LeftLabel{$\scriptstyle\rule{$p\!\rightarrow$L}$}
\UnaryInfC{$\Gamma,p,p\rightarrow\varphi\Ra\chi$}
\DisplayProof
&
\AxiomC{$\Gamma,\varphi\Ra\psi$}
\LeftLabel{$\scriptstyle\rule{$\rightarrow$R}$}
\UnaryInfC{$\Gamma\Ra \varphi\rightarrow\psi$}
\DisplayProof

\\[0.6cm]

\AxiomC{$\Phi,\Gamma,\psi,\Box\varphi\Ra\varphi$}
\AxiomC{$\Phi,\Box\Gamma,\psi\Ra\chi$}
\LeftLabel{$\scriptstyle\rule{$\Box\!\rightarrow$L}$}
\BinaryInfC{$\Phi,\Box\Gamma,\Box\varphi\rightarrow\psi\Ra\chi$}
\DisplayProof 
&
\AxiomC{$\Phi,\Gamma,\Box\varphi\Ra\varphi$}
\LeftLabel{$\scriptstyle\rule{SLtR}$}
\UnaryInfC{$\Phi,\Box\Gamma\Ra \Box\varphi$}
\DisplayProof

  \\[0.6cm]
\AxiomC{$\Gamma,\varphi\rightarrow (\psi\rightarrow\chi)\Ra\delta$}
\LeftLabel{$\scriptstyle\rule{$\land\!\rightarrow$L}$}
\UnaryInfC{$\Gamma,(\varphi\land\psi)\rightarrow\chi\Ra\delta$}
\DisplayProof
&
\AxiomC{$\Gamma,\varphi\rightarrow\chi,\psi\rightarrow\chi\Ra\delta$}
\LeftLabel{$\scriptstyle\rule{$\lor\!\rightarrow$L}$}
\UnaryInfC{$\Gamma,(\varphi\lor\psi)\rightarrow\chi\Ra\delta$}
\DisplayProof

\\[0.6cm]
\multicolumn{2}{c}{
\AxiomC{$\Gamma,\psi\rightarrow\chi\Ra \varphi\rightarrow\psi$}
\AxiomC{$\Gamma,\chi\Ra\delta$}
\LeftLabel{$\scriptstyle\rule{$\rightarrow\rightarrow$L}$}
\BinaryInfC{$\Gamma,(\varphi\rightarrow\psi)\rightarrow\chi\Ra\delta$}
\DisplayProof
}

\end{tabular}
}
\caption{The sequent calculus~$\sys{G4iSLt}$, where $\Phi$ contains no boxed formula.}
  \label{fig:iseq-pc}
\end{figure}

In $\rule{IdP}$, a propositional variable instantiating the featured occurrences of~$p$ is principal.
In a rule instance of~\rule{$\land$R}, \rule{$\land$L}, \rule{$\lor$R$_i$}, \rule{$\lor$L} or~\rule{$\imp$R}, the \textit{principal formula} of that instance is defined as usual. 
In a rule instance of~\rule{$p\!\rightarrow$L}, both a propositional variable instantiating $p$ and the formula instantiating the featured~$p\imp \varphi$ are principal formulae of that instance.
In a rule instance of~\rule{$\land\!\rightarrow$L}, \rule{$\lor\!\rightarrow$L}, \rule{$\rightarrow\rightarrow$L} or \rule{$\Box\!\rightarrow$L}, the formula instantiating respectively $(\varphi\land\psi)\rightarrow\chi$, $(\varphi\lor\psi)\rightarrow\chi$, $(\varphi\rightarrow\psi)\rightarrow\chi$ or $\Box \varphi\rightarrow\psi$ is the principal formula of that instance. 
In a rule instance of \rule{SLtR} or \rule{$\Box\!\imp$L}, $\Box\varphi$
is called the \textit{diagonal formula}~\cite{SambinValentini82}.

The non-modal rules are taken from the calculus for $\sys{IPC}$ for which backward proof search strongly terminates \cite{Dyc92}.
Keypoint is that the usual intuitionistic left implication rule is replaced by four implication rules depending on the main connective in the antecedent of the principal formula, in such a way that each premise is less complex than the conclusion. In particular, when considering the rule $\rule{$\rightarrow\rightarrow$L}$, an application of the ``regular'' left implication rule yields the more complex left premise $\Gamma,(\varphi\rightarrow\psi)\rightarrow\chi\Ra\varphi\rightarrow\psi$, which is (semantically) equivalent to the simpler left premise stated in rule~$\rule{$\rightarrow\rightarrow$L}$.

We proceed to give semantic intuitions for the rules \rule{SLtR} and \rule{$\Box\!\imp$L}.

The \rule{SLtR} rule has similarities with the rule \rule{GLR} (shown below) from sequent calculi for provability logics such as $\sys{GL}$, but with two major differences: (1) the non-boxed formulae $\Phi$ in the antecedent of the sequent are preserved from conclusion to premise in \rule{SLtR}, while they are deleted in \rule{GLR}; and (2) the formulae in $\Box\Gamma$ are not preserved upwards in \rule{SLtR}, while they are in \rule{GLR}.
\begin{center}
\begin{tabular}{c@{\hspace{2cm}}c}
$
\infer[\scriptstyle\rule{SLtR}]{\Phi,\Box\Gamma\Ra \Box\varphi}{\Phi,\Gamma,\Box\varphi\Ra\varphi}
$
&
$
\infer[\scriptstyle\rule{GLR}]{\Phi,\Box\Gamma\Ra \Box\varphi}{\Gamma,\Box\Gamma,\Box\varphi\Ra\varphi}
$ \\
\end{tabular}
\end{center}
From a backward proof search perspective, both rules correspond, semantically, to a ``modal jump'' from a point $w$ which falsifies the conclusion $\Phi,\Box\Gamma\Ra \Box\varphi$ to a modal successor $v$ which forces $\Gamma$ but falsifies the succedent $\varphi$ of the premise.
The underlying relation $R$ in both logics is transitive and converse well-founded. Using converse well-foundedness we can assume that~$v$ is the last modal successor making $\varphi$ false, thus $v$ forces $\Box\varphi$ in both logics. Transitivity implies that~$v$ forces~$\Box\Gamma$ in both logics, so all its successors force $\Gamma$. But, in $\sys{iSL}$, the underlying relation $R$ is also persistent so $v$ also forces $\Phi$ in $\sys{iSL}$, but not in $\sys{GL}$, thus explaining difference (1). Thanks to persistence, $v$ forcing $\Gamma$ implies that all its successors force $\Gamma$, meaning that $v$ forces $\Box\Gamma$ already, thus explaining difference~(2).

The two premises of \rule{$\Box\!\imp$L} capture how $\Box\varphi\imp\psi$ in the antecedent of the conclusion can be true.
The simple case is when $\psi$ is true, which corresponds to the right premise. The more complicated case is when $\psi$ is not true, implying that $\Box\varphi$ must also be not true.
Now, $\Box\varphi$ true semantically means that $\varphi$ is true in all modal successors, hence $\Box\varphi$ not true means that $\varphi$ is not true in a modal successor.
But converse well-foundedness implies the existence of a last modal successor where $\varphi$ is not true, with all its modal successors making $\varphi$ true.
The left premise corresponds to this last modal successor, as it encodes that $\varphi$ is not true but $\Box\varphi$ is true.
Moreover, this last modal successor is also an intuitionistic successor as $R\,\subseteq\,\leq$. By persistence, this last successor must also make $\Box\varphi\imp\psi$ true.
But then, a simple modus ponens on $\Box\varphi$ and $\Box\varphi\imp\psi$ gives us $\psi$.
  
Finally, we show that $\sys{G4iSLt}$ indeed captures the set of theorems of $\sys{iSL}$.

\begin{theorem}
For all $\varphi$ we have:
$\emptyset\vdash_{\sys{iSLH}}\varphi$ iff $\Ra\varphi$~is \prov~in~$\sys{G4iSLt}$.
\end{theorem}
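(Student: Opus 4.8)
The plan is to establish the two implications separately: the right-to-left direction is the \emph{soundness} of $\sys{G4iSLt}$ relative to $\sys{iSLH}$, and the left-to-right direction is its \emph{completeness}. For soundness I would prove the stronger statement that every $\sys{G4iSLt}$-provable sequent $\Gamma\Ra\chi$ gives $\Gamma\vdash_{\sys{iSLH}}\chi$, by induction on the $\sys{G4iSLt}$-proof; for completeness I would show that the set of formulae $\varphi$ with $\Ra\varphi$ provable in $\sys{G4iSLt}$ contains every axiom of $\sys{iSLH}$ and is closed under its rules, so it contains every $\sys{iSLH}$-theorem.

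For \textbf{soundness}, the initial rules and all propositional rules --- including the four $\sys{G4ip}$-style left-implication rules --- are routine, since $\sys{iSLH}$ is a consequence relation enjoying the deduction theorem: each such case is a short $\sys{iSLH}$-derivation from the hypotheses given by the induction hypothesis. The two delicate cases are \rule{SLtR} and \rule{$\Box\!\imp$L}. For \rule{SLtR}, from $\Phi,\Gamma,\Box\varphi\vdash_{\sys{iSLH}}\varphi$ the deduction theorem yields a theorem $\vdash_{\sys{iSLH}}(\hat\Phi\land\hat\Gamma\land\Box\varphi)\imp\varphi$, where $\hat\Delta$ abbreviates the conjunction of the multiset $\Delta$ (with the usual conventions when empty); applying \rule{Nec}, $A_{10}$, distribution of $\Box$ over $\land$, and strong L\"ob at $\Box\varphi$ transforms this into $\vdash_{\sys{iSLH}}(\Box\hat\Phi\land\Box\hat\Gamma)\imp\Box\varphi$; and since the completeness principle $\theta\imp\Box\theta$ is derivable in $\sys{iSLH}$ (it is exactly what validates the remark that \rule{Nec} may use premise $\Gamma\vdash\varphi$), the context $\Phi,\Box\Gamma$ proves $\Box\hat\Phi$ and $\Box\hat\Gamma$, hence $\Box\varphi$. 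The case \rule{$\Box\!\imp$L} is a variant: using the right premise it suffices to derive $\psi$ from $\Phi,\Box\Gamma,\Box\varphi\imp\psi$, for which it suffices to derive $\Box\varphi$; and the left premise together with $\Box\varphi\imp\psi$, the deduction theorem, strong L\"ob, and the completeness principle yield precisely that. Matching the Kripke semantics of Section~\ref{sec:preliminaries}, an alternative is to prove soundness of $\sys{G4iSLt}$ for those models directly --- the modal rules being justified by the ``last modal successor'' argument sketched after Figure~\ref{fig:iseq-pc}, using converse well-foundedness, transitivity, $R\,\subseteq\,\leq$ and Lemma~\ref{lem:pers} --- and then compose with Kripke-completeness of $\sys{iSLH}$.

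For \textbf{completeness}, closure under \rule{El} is vacuous for the empty antecedent. Closure under \rule{Nec} holds because, if $\Ra\varphi$ is provable, admissible weakening (Section~\ref{sec:admrules}) gives $\Box\varphi\Ra\varphi$, and \rule{SLtR} with $\Phi=\Gamma=\emptyset$ gives $\Ra\Box\varphi$. Closure under \rule{MP} is obtained from $\Ra\varphi\imp\psi$ by inverting \rule{$\imp$R} to $\varphi\Ra\psi$ and cutting against $\Ra\varphi$; this is the point at which cut-admissibility (Section~\ref{sec:cutelimination}) is needed. Finally, each axiom $A_1$--$A_{11}$ has an explicit $\sys{G4iSLt}$-derivation: the non-modal ones are the standard $\sys{G4ip}$ derivations, for which it is convenient to first record identity-expansion ($\Gamma,\varphi\Ra\varphi$ is provable for all $\varphi$) and the admissibility of the ordinary left-implication rule (recoverable from cut, contraction and invertibility of \rule{$\imp$R}); $A_{11}$ reduces, via \rule{$\imp$R} and then \rule{$\Box\!\imp$L} with $\Phi=\Gamma=\emptyset$, to two instances of identity; and $A_{10}$ is obtained by \rule{$\imp$R} twice, then \rule{SLtR}, leaving $\varphi\imp\psi,\varphi,\Box\psi\Ra\psi$, which the ordinary left-implication rule closes.

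The \textbf{main obstacle} is the pair of modal cases in the soundness direction: pushing the induction hypotheses through \rule{SLtR} and \rule{$\Box\!\imp$L} requires orchestrating \rule{Nec}, the normality axiom $A_{10}$, the derived completeness principle and strong L\"ob in just the right order, together with the bookkeeping needed to treat the (possibly empty) multisets $\Phi$ and $\Gamma$ as conjunctions --- in the Coq development this bookkeeping, rather than any single difficult idea, dominates. The completeness direction is conceptually lighter but depends on machinery proved only afterwards (admissible weakening and contraction, invertibility of \rule{$\imp$R}, and cut-admissibility), so in the paper its proof is naturally postponed until those are in place.
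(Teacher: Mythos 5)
Your proposal is correct, and your left-to-right direction coincides with the paper's: the paper formalises that $\Gamma\vdash_{\sys{iSLH}}\varphi$ yields a provable sequent $\Gamma'\Ra\varphi$ for some finite $\Gamma'\subseteq\Gamma$ by induction on the Hilbert derivation, with \rule{MP} handled exactly as you do via invertibility of \rule{$\imp$R} (Lemma~\ref{lem:inv_rules}) and cut-admissibility (Theorem~\ref{thm-CE-G4iSLt}), and \rule{Nec} via weakening plus \rule{SLtR}; your restriction to empty antecedents is a harmless specialisation of their statement~(1). Where you genuinely diverge is the converse direction: you propose a purely syntactic soundness proof, translating each $\sys{G4iSLt}$ rule into an $\sys{iSLH}$ derivation, with the modal cases discharged by the deduction theorem, \rule{Nec}, $A_{10}$, Strong L\"ob applied at $\Box\varphi$, and the derived completeness principle $\theta\imp\Box\theta$ --- this works (the key identity $\vdash(\Box\hat\Phi\land\Box\hat\Gamma)\imp\Box\varphi$ does follow as you describe, and the completeness principle is indeed a theorem of $\sys{iSLH}$). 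The paper instead takes the route you mention only as an alternative: it formalises soundness of $\sys{G4iSLt}$ with respect to the Kripke semantics of Definition~\ref{DefForcMod} (their statement~(2)) and then composes with the \emph{non-formalised} weak completeness of $\sys{iSLH}$ for local consequence due to Ardeshir and Mojtahedi. Your syntactic route buys a self-contained (and in principle fully formalisable) argument at the cost of substantial Hilbert-style bookkeeping with conjunctions of multisets; the paper's route buys a semantic soundness theorem that is independently useful and a shorter formal development, at the cost of importing an external, unformalised completeness result.
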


\begin{proof}
We proved in Coq the two following results.
\begin{center}
\begin{tabular}{c@{\hspace{0.3cm}} l@{\hspace{0.5cm}}c@{\hspace{0.5cm}}l}
(1) & $\Gamma\vdash_{\sys{iSLH}}\varphi$ & implies & there exists a finite $\Gamma'\subseteq\Gamma$ s.t. \\
 &  &  & $\Gamma'\Ra\varphi$ is provable in $\sys{G4iSLt}$ \\
 
(2) & $\Gamma\Ra\varphi$ is provable in $\sys{G4iSLt}$ & implies & $\Gamma\models\varphi$\\
\end{tabular}
\end{center}
The result (1), which relies on the admissibility of cut (Theorem \ref{thm-CE-G4iSLt}), shows that $\sys{G4iSLt}$ is (strongly) complete with respect to $\sys{iSLH}$ and gives us the left-to-right direction of our theorem.
The other direction involves the soundness of $\sys{G4iSLt}$ w.r.t.~the local consequence shown in (2), as well as the (non-formalised) result of (weak) completeness of $\sys{iSLH}$ w.r.t.~the local consequence obtained by Ardeshir and Mojtahedi \cite{ArdMoj18}.
\qed
\end{proof}

\section{Admissible rules in~$\sys{G4iSLt}$}\label{sec:admrules}

This section aims at showing that the contraction rule is admissible. To do so, it follows the work developed by Gor\'{e} and Shillito \cite{GorShi22} on the sequent calculus $\sys{GL4ip}$ for the intuitionistic provability logic $\sys{iGL}$, which extends itself on the work of Dyckhoff and Negri \cite{DycNeg00} on $\sys{G4ip}$. Most of the overall structure of the argument is the same as for the case of $\sys{GL4ip}$, except for the crucial and typical \emph{left-unboxing rule} \rule{$\boxtimes$}, shown to be height-preserving admissible.

Most of the results of this section are proven by inductions on the weight of formulae and/or height of derivations. We omit the Coq encodings for brevity.

\begin{lemma}[Height-preserving invertibility of rules]\label{lem:inv_rules}
 The rules $\rule{$\land$R},$ $\rule{$\land$L}$, $\rule{$\lor$L},\rule{$\rightarrow$R}, \rule{$p\!\rightarrow$L},$ $\rule{$\land\!\rightarrow$L},\rule{$\lor\!\rightarrow$L}$ are height-preserving invertible.
\end{lemma}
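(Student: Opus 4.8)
The plan is to prove the seven statements by induction on the height of the given derivation of the conclusion instance, with a case analysis on the last rule applied. Since in the non-principal cases we only invoke the invertibility statement currently under consideration (at strictly smaller height), the seven are mutually independent, and could equally well be bundled into one simultaneous induction.

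Take a left-rule inversion, say \rule{$\land$L}, with $\pi$ a proof of $\Gamma,\varphi\land\psi\Ra\chi$ of height $n$. If $\pi$ ends in an initial rule, then so does the target $\Gamma,\varphi,\psi\Ra\chi$ (the principal propositional variable, or $\bot$, lies in $\Gamma$, since $\varphi\land\psi$ is neither). If $\pi$ ends in \rule{$\land$L} acting on the displayed $\varphi\land\psi$, its premise \emph{is} $\Gamma,\varphi,\psi\Ra\chi$, with a proof of height $n-1$. Otherwise $\varphi\land\psi$ is a side formula of the last rule $\mathsf R$, hence it occurs in the antecedent of every premise of $\mathsf R$ (for \rule{SLtR} and \rule{$\Box\!\rightarrow$L} this uses that $\varphi\land\psi$, being unboxed, sits in the boxed-formula-free context $\Phi$, which is copied upwards); we apply the induction hypothesis to each premise, replacing $\varphi\land\psi$ by $\varphi,\psi$ without raising height, and then reapply $\mathsf R$. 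The remaining left-rule inversions \rule{$\lor$L}, \rule{$p\!\rightarrow$L}, \rule{$\land\!\rightarrow$L}, \rule{$\lor\!\rightarrow$L} go the same way, and the right-rule inversions \rule{$\land$R} and \rule{$\rightarrow$R} follow the same scheme with the induction hypothesis applied only to the premises sharing the principal succedent (for \rule{$\rightarrow$R} one also inserts $\varphi$ into the antecedent of each such premise), other premises being left untouched.

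The only steps that depart from the routine $\sys{G4ip}$/$\sys{GL4ip}$ pattern concern reapplying a modal rule \rule{SLtR} or \rule{$\Box\!\rightarrow$L}, because of the side condition that $\Phi$ be boxed-formula-free, and there are two sources of friction. First, when inverting \rule{$\rightarrow$R} past a binary rule one of whose premises does not carry the principal succedent, namely \rule{$\Box\!\rightarrow$L} or \rule{$\rightarrow\rightarrow$L}, that premise never received the new antecedent formula $\varphi$ and must be repaired using height-preserving weakening. Second, and more delicately, when the inversion produces a \emph{boxed} formula --- inverting \rule{$\land$L} on $\Box\delta\land\psi$, \rule{$\lor$L} on a disjunction with a boxed disjunct, \rule{$p\!\rightarrow$L} on $p\rightarrow\Box\delta$, or \rule{$\rightarrow$R} with boxed $\varphi$ --- while $\mathsf R$ is \rule{SLtR} or \rule{$\Box\!\rightarrow$L}, the side condition forces that boxed formula into the $\Box\Gamma$-part of the reapplied rule, so its modal premise then demands the \emph{unboxed} formula in that slot; this exact gap is bridged by the height-preserving admissible left-unboxing rule \rule{$\boxtimes$}, which is sound for $\sys{iSL}$ by the completeness axiom $\delta\rightarrow\Box\delta$. (By contrast \rule{$\land\!\rightarrow$L}- and \rule{$\lor\!\rightarrow$L}-inversions only ever produce implications, and \rule{$\land$R}-inversion never touches the antecedent, so those three remain entirely routine.) I expect these modal cases --- keeping the $\Phi$/$\Box\Gamma$ bookkeeping straight and inserting the appeals to \rule{$\boxtimes$} and to weakening at the right moments --- to be the main work; everything else is standard $\sys{G4ip}$-style rule permutation, and the structural facts invoked (height-preserving weakening and height-preserving admissibility of \rule{$\boxtimes$}) are themselves obtained by straightforward inductions on height.
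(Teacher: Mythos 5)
Your proof is correct and follows essentially the same route as the paper: a height-preserving induction with case analysis on the last rule, where the modal cases are repaired by height-preserving weakening and, crucially, the height-preserving admissible left-unboxing rule \rule{$\boxtimes$} — exactly the dependency the paper points to when it notes that \rule{$\boxtimes$} is crucially used in this lemma. The only cosmetic remark is that \rule{SLtR} can never be the last rule when inverting \rule{$\rightarrow$R} or \rule{$\land$R} (succedent mismatch), so that subcase is vacuous.
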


We present height-preserving admissible and admissible rules in Figure \ref{fig:adm}.

\begin{figure}[t]
\begin{center}
\begin{tabular}{c}
\textbf{Height-preserving admissible rules}
\end{tabular}
\end{center}

\begin{center}
\begin{tabular}{c@{\hspace{1cm}}c@{\hspace{1cm}}c}
\AxiomC{$\Gamma_0,\Gamma_3,\Gamma_2,\Gamma_1,\Gamma_4\Ra\chi$}
\LeftLabel{$\scriptstyle\rule{Exc}$}
\UnaryInfC{$\Gamma_0,\Gamma_1,\Gamma_2,\Gamma_3,\Gamma_4\Ra\chi$}
\DisplayProof
& 
\AxiomC{$\Gamma\Ra\chi$}
\LeftLabel{$\scriptstyle\rule{Wkn}$}
\UnaryInfC{$\Gamma,\varphi\Ra\chi$}
\DisplayProof
&
\AxiomC{$\Phi,\Box\Gamma\Ra\chi$}
\LeftLabel{$\scriptstyle\rule{$\boxtimes$}$}
\UnaryInfC{$\Phi,\Gamma\Ra\chi$}
\DisplayProof
\end{tabular}
\end{center}

\begin{center}
\begin{tabular}{c@{\hspace{1.5cm}}c}
\AxiomC{$\Phi,\Box\Gamma,\Box\varphi\rightarrow\psi\Ra\chi$}
\LeftLabel{$\scriptstyle\rule{$\Box\!\imp$LIR}$}
\UnaryInfC{$\Phi,\Box\Gamma,\psi\Ra\chi$}
\DisplayProof
&
\AxiomC{$\Gamma,(\varphi\rightarrow\psi)\rightarrow\chi\Ra\delta$}
\LeftLabel{$\scriptstyle\rule{$\imp\imp$LIR}$}
\UnaryInfC{$\Gamma,\chi\Ra\delta$}
\DisplayProof
\end{tabular}
\end{center}
 
\begin{center}
\begin{tabular}{c}
\textbf{Admissible rules}
\end{tabular}
\end{center}

\begin{center}
\begin{tabular}{c@{\hspace{1cm}}c}
\AxiomC{}
\LeftLabel{$\scriptstyle\rule{Id}$}
\UnaryInfC{$\varphi,\Gamma\Ra\varphi$}
\DisplayProof
&
\AxiomC{$\Gamma\Ra\varphi$}
\AxiomC{$\Gamma,\psi\Ra\chi$}
\LeftLabel{$\scriptstyle\rule{$\rightarrow$L}$}
\BinaryInfC{$\Gamma,\varphi\rightarrow\psi\Ra\chi$}
\DisplayProof
\end{tabular}
\end{center}

\begin{center}
\begin{tabular}{c@{\hspace{1cm}}c}
\AxiomC{$\Gamma,(\varphi\rightarrow\psi)\rightarrow\delta\Ra\chi$}
\LeftLabel{$\scriptstyle\rule{$\imp\imp$LIL}$}
\UnaryInfC{$\Gamma,\varphi,\psi\rightarrow\delta,\psi\rightarrow\delta\Ra\chi$}
\DisplayProof
&
\AxiomC{$\varphi,\varphi,\Gamma\Ra\chi$}
\LeftLabel{$\scriptstyle\rule{Ctr}$}
\UnaryInfC{$\varphi,\Gamma\Ra\chi$}
\DisplayProof
\end{tabular}
\end{center}
\caption{Height-preserving admissible and admissible rules in $\sys{G4iSLt}$.}
\label{fig:adm}
\end{figure}

The structural rules of weakening \rule{Wkn}, contraction \rule{Ctr} and exchange \rule{Exc}, are all (at least) admissible. The presence of the latter may be surprising, as the sequents we use are based on multisets. However, as mentioned earlier, our formalisation encodes sequents using lists and not multisets. So, the formal proof of the height-preserving admissibility of \rule{Exc} shows that list-sequents of our formalisation mimic multiset-sequents of the pen-and-paper definition. In fact, we designed the formalisation of $\sys{G4iSLt}$ so that it admits exchange \cite{GorShi22}. 

The rule \rule{$\boxtimes$} is quite typical of the logic $\sys{iSL}$, as it reflects one of its theorems: the completeness axiom $\varphi\imp\Box\varphi$. Indeed, this axiom implies that $\Gamma$ entails~$\Box\Gamma$, allowing the replacement of~$\Box\Gamma$ by $\Gamma$ in the antecedent of a provable sequent while preserving provability. The height-preserving admissibility of \rule{$\boxtimes$} is crucially used in many places, notably Lemma \ref{lem:inv_rules} and the admissibility of cut.

The height-preserving admissibility of \rule{$\Box\!\rightarrow$LIR} and \rule{$\rightarrow\rightarrow$LIR} shows height-preserving invertibility in the right premise of the rules \rule{$\Box\!\rightarrow$L} and \rule{$\imp\imp$L}.

The admissible rule \rule{$\rightarrow$L} is the traditional left-implication rule. 
We use this rule to prove the admissibility of \rule{$\rightarrow\rightarrow$LIL}, resembling the invertibility in the left premise of \rule{$\rightarrow\rightarrow$L}.
In turn, \rule{$\rightarrow\rightarrow$LIL} is crucial in the admissibility of \rule{Ctr}.

In the following section we introduce a measure on sequents which we use to show that the naive backward proof search strategy for $\sys{G4iSLt}$ terminates. This measure could thus be used to derive the notion of maximum height of derivations (mhd) for a sequent, as was done in previous works \cite{GorRamShi21,GorShi22}. There, the mhd measure was used as secondary induction measure in the proof of admissibility of cut. Here, we simply use the termination measure instead.

\section{Naive backward proof search terminates}\label{sec:termination}

Sequent calculi enjoying cut-elimination can often be used to decide whether a given formula $\varphi$ is deducible from a given set of assumptions $\Gamma$ by strategically applying the rules ``backwards'' from the end-sequent $\Gamma \Ra \varphi$.
To obtain a decision procedure, we require a backward proof search strategy which terminates and is complete, 
i.e.~which provides a proof for any sequent provable in the calculus.

But often, terminating complete strategies necessitate a ``loop check'' mechanism, that stops the search if the same sequent appears twice on a branch.
For example, the sequent calculus $\sys{LJ}$, for propositional intuitionistic logic, only has a strategy with loop check as terminating complete strategy.
The termination of these strategies is messy to reason about, as in most cases their unguarded version is not terminating and results in proof trees with infinite branches.

While some calculi have terminating complete strategies without loop checks, like $\sys{GLS}$ for $\sys{GL}$ \cite{GorRamShi21} and $\sys{GL4ip}$ for $\sys{iGL}$ \cite{GieIem21}, we consider a stronger kind of calculus: calculi with \emph{strongly terminating} backward proof search, such as $\sys{G4ip}$ for intuitionistic propositional logic \cite{Dyc16}.
Backward proof search for a sequent calculus is strongly terminating if and only if \emph{all} backward proof search strategies for this calculus, complete or not, terminate.
This characterisation has other equivalent forms: 
(1) the naive backward proof search strategy terminates,
and (2) there is a well-founded ordering on sequents 
decreasing upwards in all the rules of the calculus.
In contrast, backward proof search is \emph{weakly} terminating if and only if \emph{there is} a terminating complete strategy for this calculus.

In this section we show that backward proof search for $\sys{G4iSLt}$ is strongly terminating. More precisely, we show that the naive strategy terminates.
To do this, we need two ingredients: (1) a locally defined measure on sequents, and (2) a well-founded order making this measure decrease upwards in the rules of~$\sys{G4iSLt}$.

\subsection{Shortlex: a well-founded order on \lstinline{list} $\mathbb N$}

We define the shortlex order, which is a well-founded order on \lstinline{list} $\mathbb N$, i.e.~the set of all lists of natural numbers.

In the following, we use $<$ to mean the usual ordering on natural numbers. Let us recall the definition of the lexicographic order on lists of natural numbers.

\begin{definition}[Lexicographic order]
Let $n\in\mathbb N$. We define the lexicographic order $<_{lex}^n$ on lists of natural numbers of length $n$. For two lists of natural numbers $[m_1;\cdots;m_n]$ and $[k_1;\cdots;k_n]$, we write $[m_1;\cdots;m_n]<_{lex}^{n}[k_1;\cdots;k_n]$ if there is a $1\!\leq\! j \!\leq\! n$ such that: (1) $m_p=k_p$ for all $1\leq p<j$, and (2) $m_j<k_j$.
\end{definition}

Note that as $<$ is a well-founded order, then $<_{lex}^{n}$ is also well-founded \cite{Pau86}. Finally, we define the shortlex order, also called \textit{breadth-first} \cite{LarMat19} or \textit{length-lexicographic} order, over lists of natural numbers (viewed as $n$-tuples).

\begin{definition}[Shortlex order]
  The shortlex order over lists of natural numbers, noted $\slex$, is defined as follows.
  For two lists $l_0$ and $l_1$ of natural numbers,
  we say that $l_0\slex l_1$ whenever one of the following conditions is satisfied:
\begin{enumerate}
\item $length(l_0)<length(l_1)$ ; 
\item $length(l_0)=length(l_1)=n$ and $l_0<_{lex}^{n}l_1$.
\end{enumerate}
\end{definition}

Intuitively, the shortlex order is ordering lists according to their length and follows the lexicographic order whenever length does not discriminate. Note that on top of being well-founded, $\slex$ is obviously transitive.

\subsection{A (\lstinline{list} $\mathbb N$)-measure on sequents}

We proceed to attach to each sequent $\Gamma\Ra \chi$ a ``measure'' $\Theta(\Gamma\Ra \chi)$ which is a (finite) list of natural numbers, i.e.~of type \lstinline{list} $\mathbb N$. For simplicity, in the following we consider a fixed sequent $\Gamma\Ra \chi$ for which we define the measure.

To introduce our measure, we first wish to explain why the measure used for $\sys{GL4ip}$ \cite{GorShi22}, acting as a substitute of the Dershowitz-Manna order \cite{DerZoh79} considered in Dyckhoff's article on $\sys{G4ip}$ \cite{Dyc92}, does not work for our purpose. The explanation of this failure justifies the modification we made to obtain the measure for $\sys{G4iSLt}$.

The intuition behind the measure for $\sys{GL4ip}$ and $\sys{G4ip}$ is the following: for a multiset we create an ordered list of counters for each weight of occurrences of formulae of this weight.
For more details, take a finite multiset of formulae~$\Delta$. 
As it is finite, it contains a \emph{topmost} formula of maximal weight $n$. 
We can create a list of length $n$ such that at each position $m$ in the list (counting from right to left) for $1\leq m\leq n$, we find the number of occurrences in $\Delta$ of \emph{topmost} formulae of weight $m$. 
Such a list gives the count of occurrences in $\Delta$ of formulae of weight~$n$ in its leftmost (i.e.~$n$-th) component, then of occurrences of formulae of weight~$n-1$ in the next (i.e.~$(n-1)$-th) component, and so on until we reach~$1$. 

The measure for $\sys{GL4ip}$ and $\sys{G4ip}$ consisted in attaching to $\Gamma\Ra\chi$ the list obtained by applying the above procedure on the multiset $\Gamma\uplus\{\chi\}$. Call this function $\Theta_{fail}$. This measure fails to show termination of the naive strategy for $\sys{G4iSLt}$, as it does not decrease upwards in the following application of \rule{SLtR}.

$$
\infer[\scriptstyle\rule{SLtR}]{\Ra\Box p}{\Box p\Ra p}
$$

We have that $\Theta_{fail}(\Ra\Box p)=[1,0]$ because $\Box p$ is the formula of maximum weight $2$, and it is the only formula with this weight occurring in the list, while no formula of weight $1$ appears in $\Ra\Box p$.
In addition to that, we have that $\Theta_{fail}(\Box p\Ra p)=[1,1]$.
Consequently, we obtain $\Theta_{fail}(\Ra\Box p)\slex\Theta_{fail}(\Box p\Ra p)$: the measure increased upwards.
So, the measure used for $\sys{GL4ip}$ and $\sys{G4ip}$ cannot be used here. We need to define another one.

With enough scrutinising, one can notice that in $\sys{G4iSLt}$ the principal box of a boxed formula in the antecedent of a sequent is a ``deadweight''.
More precisely, once a formula $\Box\varphi$ is in the antecedent of a sequent, only two things can happen to its outermost box: it is either deleted (via the modal rule \rule{SLtR} or \rule{$\Box\!\imp$L}), or else it is preserved (through all other rules).
Intuitively, this observation suggests that boxed formulae in the antecedent are destined to be unboxed eventually in the upward application of rules, without having any other effect.

Consequently, as the top-level boxes in the antecedent of a sequent are deadweights, we can think about unboxing the antecedent of $\Gamma\Ra \chi$ before applying the procedure described above.
This is precisely what we do: if $\Gamma$ is of the shape $\Gamma_0,\Box\Gamma_1$ with no boxed formula in $\Gamma_0$, we define $\Theta(\Gamma\Ra\chi)$ to be the list of natural numbers obtained via the above machinery applied on the multiset $\Gamma_0\uplus\Gamma_1\uplus\{\chi\}$.

For example, to compute $\Theta(\Box(p\land q), p\lor q\Ra q\imp p)$, we first unbox the antecedent of this sequent by transforming
  $\Box(p\land q)$ into $p\land q$ to obtain the multiset $\{p\land q, p\lor q, q\imp p\}$.
Because $p\land q$ is the only formula of maximum weight four, our list of length four begins with 1.
Since both $p\lor q$ and $q\imp p$ are of weight three, the second element is 2.
Finally, since there are no formulae of weights two and one, we obtain $\Theta(\Box(p\land q), p\lor q\Ra q\imp p) = [1,2,0,0]$. Following this explanation, observe that the issue we faced with $\Ra\Box p$ and $\Box p\Ra p$ is now fixed: we first unbox $\Box p$ in $\Box p\Ra p$, hence $\Theta(\Box p\Ra p)=[2]\slex[1,0]=\Theta(\Ra\Box p)$.

Two things need to be noted about such lists. First, if no topmost occurrence of a formula is of weight $1\leq k\leq n$, then a $0$ appears in position $k$ in the list. This is the case for the weight $2$ in the 
last example above. 
Second, as no formula is of weight $0$ we do not dedicate a position for this particular weight in our list.

\subsection{Every rule of $\sys{G4iSLt}$ reduces $\Theta$ upwards}

We obtain the sought after result about our measure $\Theta$: it decreases upwards through the rules of $\sys{G4iSLt}$ on the $\slex$ ordering (see the \hyperref[appendix]{Appendix} for a proof): 

\begin{lemma}\label{lem:meas-decr-rule}
  For all sequents $S_0, S_1,...,S_n$ and for all $1\leq i\leq n$,
  if there is an instance of a rule $r$ of $\sys{G4iSLt}$ of the form below, then $\Theta(S_i)\slex\Theta(S_0)$:
$$
\infer[r]{S_0}{S_1 & \dots & S_n}
$$
\end{lemma}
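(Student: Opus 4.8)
The argument proceeds by a case distinction on the rule $r$. For each rule I would compute the multiset obtained by unboxing the antecedent of the conclusion $S_0$ and the analogous multiset for each premise $S_i$, then compare the resulting lists under $\slex$. The organising principle is: since $\slex$ first compares lengths and then proceeds lexicographically from the high-weight (leftmost) end, it suffices in each case either to show that the maximal weight strictly drops, or that all counts above some weight $k$ are unchanged while the count at weight $k$ strictly decreases (possibly at the cost of introducing lower-weight formulae, which do not matter). Dyckhoff's weight function was designed precisely so the propositional $\sys{G4ip}$-rules have this property, so those cases — $\rule{$\land$L}$, $\rule{$\lor$L}$, $\rule{$\rightarrow$R}$, $\rule{$p\!\rightarrow$L}$, $\rule{$\land\!\rightarrow$L}$, $\rule{$\lor\!\rightarrow$L}$, $\rule{$\rightarrow\rightarrow$L}$, $\rule{$\land$R}$, $\rule{$\lor$R$_i$}$ — go through essentially as in \cite{Dyc92,GorShi22}, after observing that unboxing the (unchanged) boxed part $\Box\Gamma$ of the antecedent commutes with these transformations and so contributes the same fixed sublist to both conclusion and premise measures. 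I would bundle these into one lemma citing the inequality $w(\varphi\rightarrow(\psi\rightarrow\chi))<w((\varphi\land\psi)\rightarrow\chi)$ already flagged in the text, plus the obvious $w(\varphi)<w(\Box\varphi)$, $w(\varphi_i)<w(\varphi_1\lor\varphi_2)$, etc.

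The genuinely new cases are the two modal rules. For $\rule{SLtR}$ with conclusion $\Phi,\Box\Gamma\Ra\Box\varphi$ and premise $\Phi,\Gamma,\Box\varphi\Ra\varphi$: unboxing the conclusion's antecedent gives the multiset $\Phi\uplus\Gamma\uplus\{\Box\varphi\}$ (the succedent $\Box\varphi$ is not boxed in the antecedent so it stays boxed), and unboxing the premise's antecedent gives $\Phi\uplus\Gamma\uplus\{\varphi\}$ (here $\Box\varphi$ sits in the antecedent and gets unboxed). So passing from conclusion to premise we replace a single occurrence of $\Box\varphi$ (weight $w(\varphi)+1$) by a single occurrence of $\varphi$ (weight $w(\varphi)$): strictly lighter, with everything of weight $>w(\varphi)+1$ untouched. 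Hence $\Theta(\text{premise})\slex\Theta(\text{conclusion})$. For $\rule{$\Box\!\imp$L}$ with conclusion $\Phi,\Box\Gamma,\Box\varphi\rightarrow\psi\Ra\chi$: the right premise $\Phi,\Box\Gamma,\psi\Ra\chi$ replaces $\Box\varphi\rightarrow\psi$ (weight $w(\varphi)+w(\psi)+2$, since $w(\Box\varphi)=w(\varphi)+1$) by $\psi$ alone — strictly lighter at the top, exactly the kind of drop that works. The left premise $\Phi,\Gamma,\psi,\Box\varphi\Ra\varphi$ is the delicate one: unboxed, its antecedent-plus-succedent multiset is $\Phi\uplus\Gamma\uplus\{\psi,\varphi\}$, whereas the conclusion unboxes to $\Phi\uplus\Gamma\uplus\{\Box\varphi\rightarrow\psi,\chi\}$. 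We are deleting one occurrence of $\Box\varphi\rightarrow\psi$, whose weight is $w(\varphi)+w(\psi)+2$, and adding $\psi$ (weight $w(\psi)$), $\varphi$ (weight $w(\varphi)$), and possibly changing $\chi$ to $\varphi$ in the succedent slot — but note $\chi$ was already present in the conclusion's multiset, so really we add the two formulae $\psi,\varphi$ and remove the one formula $\Box\varphi\rightarrow\psi$. Since $w(\varphi),w(\psi)<w(\varphi)+w(\psi)+2$, both added formulae are strictly lighter than the removed one, and nothing of weight $\geq w(\varphi)+w(\psi)+2$ is added. I would make this precise with a small auxiliary observation: if a multiset $M'$ is obtained from $M$ by removing one formula of weight $N$ and adding finitely many formulae each of weight $<N$, and nothing of weight $\ge N$ is added anywhere, then the associated list for $M'$ is $\slex$-below that for $M$ (the leftmost position where they differ is at some weight $\le N$, and there the count strictly decreases or stays fixed-then-eventually-decreases).

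The main obstacle I anticipate is purely bookkeeping: carefully tracking what happens to the \emph{succedent} $\chi$ (respectively $\varphi$, $\psi$, $\delta$) across each rule, since $\Theta$ folds the succedent into the same multiset as the unboxed antecedent, and several rules change the succedent. In particular in $\rule{SLtR}$ the conclusion's succedent $\Box\varphi$ is a \emph{boxed} formula but is \emph{not} unboxed (unboxing only touches the antecedent), so one must not confuse it with the antecedent occurrence of $\Box\varphi$ in the premise; conversely in $\rule{$\rightarrow$R}$ the succedent $\varphi\rightarrow\psi$ of the conclusion is replaced by $\psi$ while $\varphi$ joins the antecedent, and $w(\varphi)+w(\psi)<w(\varphi\rightarrow\psi)$ saves the day. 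I would also need the elementary fact, proved once, that the ``list-of-weight-counters'' construction is monotone in the expected sense under these multiset operations — essentially repackaging the Dershowitz–Manna comparison \cite{DerZoh79} into the $\slex$ language — and that appending the fixed counter-contribution of an untouched sublist (here the unboxed $\Box\Gamma$, or the unchanged context $\Phi$, or $\Gamma$ in the non-modal rules) to both sides preserves $\slex$. With those two lemmas in hand, every rule reduces to a one-line weight inequality, and the proof is complete.
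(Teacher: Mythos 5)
Your overall strategy is the same as the paper's: unbox the antecedents, observe that the common part $\Phi\uplus\Gamma$ contributes identically to both sides, and reduce each rule to a ``replace one heavy occurrence by strictly lighter ones'' comparison, with the propositional cases settled by Dyckhoff's weight inequalities and the modal cases done by hand. But your bookkeeping of the succedent --- which you yourself identify as the main hazard --- slips in precisely the two cases that matter. For \rule{SLtR}, the $\Theta$-multiset of the premise $\Phi,\Gamma,\Box\varphi\Ra\varphi$ is $\Phi\uplus\Gamma\uplus\{\varphi,\varphi\}$: the antecedent occurrence of $\Box\varphi$ unboxes to $\varphi$ \emph{and} the succedent $\varphi$ is added. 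So the comparison is one occurrence of $\Box\varphi$ against \emph{two} occurrences of $\varphi$, not one against one as you state; this is still an $\slex$-decrease (the count at weight $w(\Box\varphi)$ drops and only strictly lower positions grow), and it is exactly the computation the paper gives. For the left premise of \rule{$\Box\!\rightarrow$L} there is a second slip: the succedent changes from $\chi$ to $\varphi$, so $\chi$ does \emph{not} remain in the premise's multiset, contrary to your claim; the true difference is ``remove $\Box\varphi\rightarrow\psi$ and $\chi$, add $\psi,\varphi,\varphi$''.

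Because of these miscounts, your auxiliary observation (``remove exactly one formula of weight $N$, add finitely many formulae of weight $<N$'') does not literally cover the actual multiset differences: several rules (\rule{$\Box\!\rightarrow$L} left premise, \rule{$\rightarrow\rightarrow$L} left premise, the right rules) remove more than one occurrence because the succedent changes too. What you need, and do gesture at in your final paragraph, is the slightly more general Dershowitz--Manna-style principle: if every weight whose count increases is dominated by some strictly larger weight whose count strictly decreases, then the counter list drops in $\slex$. Under that principle the corrected differences for both modal rules (and all propositional ones) go through, so the lemma is recoverable with only these local repairs; but as written, the two showcase computations assert false intermediate identities and must be fixed before the argument is watertight.
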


Clearly, this result implies that the naive strategy for $\sys{G4iSLt}$ terminates: any rule application makes the measure decrease on $\slex$, ensuring termination via well-foundedness of $\slex$. Thus, backward proof search is strongly terminating.

Moreover, this lemma is quite crucial in the proof of admissibility of cut: as we use $\Theta(\Gamma\Ra\chi)$ as secondary induction measure (through well-foundedness of $\slex$) there, we know that we can apply the secondary induction hypothesis on any sequent $S$ which is a premise of $\Gamma\Ra\chi$ through a rule, as $\Theta(S)\slex\Theta(\Gamma\Ra\chi)$.

\section{Cut-elimination for~$\sys{G4iSLt}$}\label{sec:cutelimination}

To reach cut-elimination, our main theorem, we first state and prove the admissibility of the cut rule in a direct and purely syntactic way. More precisely, we prove that the \textit{additive}-cut rule, with \emph{cut formula} $\varphi$, is admissible. This statement and its formalisation are given below, where $\Gamma$ is encoded as \lstinline{$\Gamma$0++$\Gamma$1}.

\begin{theorem}[Admissibility of additive-cut]\label{thm-CE-G4iSLt}
The additive cut rule below is admissible in $\sys{G4iSLt}$.
$$
\infer[\scriptstyle\rule{Cut}]{\Gamma\Ra\psi}{
	\Gamma\Ra\varphi
	&
	\varphi,\Gamma\Ra\psi
}
$$
\end{theorem}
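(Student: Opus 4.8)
The plan is to prove admissibility of additive-cut by the standard multiple-induction argument, adapted to the \emph{mhd proof technique} referenced in the paper, but using the termination measure $\Theta$ (via well-foundedness of $\slex$, Lemma~\ref{lem:meas-decr-rule}) in place of a separately-defined maximum-height-of-derivations measure. The induction is lexicographic on the triple $(w(\varphi),\ \Theta(\Gamma\Ra\psi),\ h(\pi_1)+h(\pi_2))$ where $\varphi$ is the cut-formula, $\pi_1$ is the proof of $\Gamma\Ra\varphi$ and $\pi_2$ is the proof of $\varphi,\Gamma\Ra\psi$: the primary measure is the weight of the cut-formula, ordered by $<$ on $\mathbb{N}$; the secondary measure is the $\Theta$-measure of the conclusion sequent, ordered by $\slex$; and the tertiary measure is the sum of the heights of the two premise derivations. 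So the first step is to set up this nested induction and fix the two premise proofs $\pi_1,\pi_2$.

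Next I would do the usual case analysis on the last rules applied in $\pi_1$ and $\pi_2$. The routine cases split as follows. (i) If the last rule of $\pi_1$ or of $\pi_2$ does not have $\varphi$ as principal formula (or is $\rule{$\bot$L}$ or $\rule{IdP}$ in a trivial position), we permute the cut upwards into the premise(s) of that rule; this strictly decreases the tertiary measure while keeping $w(\varphi)$ fixed, and — this is where Lemma~\ref{lem:meas-decr-rule} is essential — it keeps the secondary measure non-increasing because any premise $S$ of a rule with conclusion $\Gamma\Ra\psi$ satisfies $\Theta(S)\slex\Theta(\Gamma\Ra\psi)$, so after permutation the new cut is on a sequent with a smaller-or-equal $\Theta$; in the cases where $\Theta$ is only non-increasing we instead rely on the drop in height, and where $\Theta$ strictly drops we may even have increased heights. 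Care is needed for the rules that duplicate context or subformulae ($\rule{$p\!\rightarrow$L}$, $\rule{$\Box\!\rightarrow$L}$, $\rule{$\rightarrow\rightarrow$L}$), but for these the relevant premises still have strictly smaller $\Theta$ by Lemma~\ref{lem:meas-decr-rule}, so permutation is licensed. (ii) The principal cases, where $\varphi$ is principal on both sides: for $\varphi = \psi_1 \land \psi_2$, $\varphi = \psi_1 \lor \psi_2$, $\varphi = \psi_1 \imp \psi_2$, and $\varphi = p$, we reduce to cuts on the immediate subformulae, each of strictly smaller weight, so the primary induction hypothesis applies; here we also use height-preserving invertibility (Lemma~\ref{lem:inv_rules}) and admissibility of contraction, weakening and $\rule{$\boxtimes$}$ (Figure~\ref{fig:adm}) to massage contexts, noting that invertibility and the admissible structural rules do not raise $w(\varphi)$ of the new cuts.

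The genuinely hard case — and the one I expect to be the main obstacle — is when the cut-formula is a boxed formula $\varphi = \Box\vartheta$ that is principal in $\pi_1$ via $\rule{SLtR}$ and principal in $\pi_2$ via either $\rule{SLtR}$ or $\rule{$\Box\!\rightarrow$L}$ (the latter with $\Box\vartheta\imp\psi'$ as its principal formula, $\vartheta$ matching the diagonal). This is the analogue of the notorious $\rule{GLR}$-$\rule{GLR}$ clash in $\sys{GL}$ where the diagonal formula breaks the naive induction. The plan for this case is: from the $\rule{SLtR}$ step in $\pi_1$ we have a proof of $\Phi,\Gamma',\Box\vartheta\Ra\vartheta$ (with $\Gamma = \Phi,\Box\Gamma'$), and from the $\rule{SLtR}$/$\rule{$\Box\!\rightarrow$L}$ step in $\pi_2$ we extract a proof of the corresponding ``unboxed'' premise. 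We then assemble these using: a cut on $\Box\vartheta$ (same weight, but on a sequent whose $\Theta$ is strictly smaller — because that sequent is a premise reachable through $\rule{SLtR}$/$\rule{$\Box\!\rightarrow$L}$, so Lemma~\ref{lem:meas-decr-rule} gives the $\slex$-decrease, unlocking the secondary induction hypothesis); possibly a cut on $\vartheta$ itself, which has strictly smaller weight $w(\vartheta) < w(\Box\vartheta)$, unlocking the primary induction hypothesis; together with $\rule{$\boxtimes$}$ to unbox contexts, and weakening/contraction admissibility to align antecedents before re-applying $\rule{SLtR}$ to rebuild the boxed conclusion. The delicate bookkeeping is checking that every new cut introduced in this reconstruction is strictly smaller in the lexicographic order $(w(\varphi),\Theta,\text{height-sum})$ — for the $\Box\vartheta$-cut this is exactly the payoff of having proved Lemma~\ref{lem:meas-decr-rule}, and it is why the paper can use $\Theta$ as the secondary induction parameter rather than a bespoke mhd function. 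Finally, cut-elimination proper follows by a straightforward induction on the structure of a derivation using $\rule{Cut}$, replacing each topmost cut by the admissibility result.
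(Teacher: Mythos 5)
Your proposal is correct and follows essentially the same route as the paper: a primary induction on the weight of the cut-formula and a secondary induction on $\Theta$ of the conclusion, unlocked by Lemma~\ref{lem:meas-decr-rule} (plus transitivity of $\slex$), with the same toolkit of invertibility and admissible rules (\rule{Wkn}, \rule{Ctr}, \rule{$\boxtimes$}, \rule{$\Box\!\imp$LIR}, \rule{$\imp\imp$LIR/LIL}) and the same treatment of the boxed principal--principal clash via a same-weight cut justified by a strict $\Theta$-decrease together with smaller-weight cuts on the unboxed formula. The only deviation is your tertiary height-sum component, which turns out to be superfluous: in every case where you would fall back on it, the new cut's conclusion is reachable through rule premises from the old one, so $\Theta$ already strictly decreases, which is fortunate since the non-height-preserving admissible rules (\rule{Ctr}, \rule{$\imp\imp$LIL}) would otherwise undermine any height bookkeeping, as the paper itself points out.
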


\begin{lstlisting}[texcl]
Theorem G4iSLt_cut_adm : forall $\varphi$ $\Gamma$0 $\Gamma$1 $\chi$,
 (G4iSLt_prv ($\Gamma$0++$\Gamma$1,$\varphi$) * G4iSLt_prv ($\Gamma$0++$\varphi$::$\Gamma$1,$\chi$)) ->
               G4iSLt_prv ($\Gamma$0++$\Gamma$1,$\chi$).
\end{lstlisting}

\begin{proof}
Let~$d_{1}$ (with last rule~$r_{1}$) and~$d_{2}$ (with last rule~$r_{2}$) be \prfs~in~$\sys{G4iSLt}$ of  $\Gamma\Ra\varphi$ and $\varphi,\Gamma\Ra \chi$ respectively, as shown below.
\begin{center}
\AxiomC{$d_1$}
\RightLabel{$r_{1}$}
\UnaryInfC{$\Gamma \Ra\varphi$}
\DisplayProof
\qquad
\AxiomC{$d_2$}
\RightLabel{$r_{2}$}
\UnaryInfC{$\varphi, \Gamma\Ra \chi$}
\DisplayProof
\end{center}
We show that there is a \prf~in~$\sys{G4iSLt}$ of~$\Gamma\Ra \chi$.
We reason by strong primary induction (PI) on the weight of the cut-formula $\varphi$, giving the primary inductive hypothesis (PIH).
We also use a strong secondary induction (SI) on $\Theta(\Gamma\Ra \chi)$ of the conclusion of a cut, giving the secondary inductive hypothesis (SIH). 
Crucially, by using SIH we avoid the issues caused by the diagonal formula~\cite{Valentini83,GoreRamanayake12}.

We consider $r_1$. In total, there are thirteen cases for $r_1$: one for each rule in $\sys{G4iSLt}$. However, we can reduce the number of cases to eight. We separate them by using Roman numerals and showcase the most interesting ones. We show more cases in the  \hyperref[appendix]{Appendix}.

\noindent \textbf{(V)} $\bold{r_1=}\textbf{\rule{$\rightarrow$R}}:$ Then $r_1$ has the following form where $\varphi=\varphi_0\rightarrow\varphi_1$:
\begin{center}
\AxiomC{$\varphi_0,\Gamma\Ra\varphi_1$}
\RightLabel{$\scriptstyle\rule{$\rightarrow$R}$}
\UnaryInfC{$\Gamma\Ra\varphi_0\rightarrow\varphi_1$}
\DisplayProof
\end{center}
For the cases where $\varphi_0\rightarrow\varphi_1$ is principal in $r_2$ and $r_2\neq\rule{$\Box\!\rightarrow$L}$, or where $r_2\in\{\rule{IdP},\rule{$\bot$L}\}$, we refer to Dyckhoff and Negri's proof \cite{DycNeg00} as the cuts produced in these cases involve the traditional induction hypothesis PIH. We are left with seven sub-cases, but here again focus on the most interesting ones.

\noindent\textbf{(V-d)} If $r_2$ is \rule{$\rightarrow\rightarrow$L} where the cut formula is not principal in $r_2$, then it must have the following form where $(\gamma_0\rightarrow\gamma_1)\rightarrow\gamma_2,\Gamma_0=\Gamma$.
\begin{center}
\AxiomC{$\varphi_0\rightarrow\varphi_1,\gamma_1\rightarrow\gamma_2,\Gamma_0\Ra\gamma_0\rightarrow\gamma_1$}
\AxiomC{$\varphi_0\rightarrow\varphi_1,\gamma_2,\Gamma_0\Ra\chi$}
\RightLabel{$\scriptstyle\rule{$\rightarrow\rightarrow$L}$}
\BinaryInfC{$\varphi_0\rightarrow\varphi_1,(\gamma_0\rightarrow\gamma_1)\rightarrow\gamma_2,\Gamma_0\Ra\chi$}
\DisplayProof
\end{center}
Thus, $\Gamma\Ra\chi$ is of the form $(\gamma_0\rightarrow\gamma_1)\rightarrow\gamma_2,\Gamma_0\Ra\chi$ and $\Gamma\Ra\varphi_0\rightarrow\varphi_1$ is of the form $(\gamma_0\rightarrow\gamma_1)\rightarrow\gamma_2,\Gamma_0\Ra\varphi_0\rightarrow\varphi_1$. Using the admissible rule \rule{$\rightarrow\rightarrow$LIR} on the latter we obtain a proof of the sequent $\gamma_2,\Gamma_0\Ra\varphi_0\rightarrow\varphi_1$. Then consider the following proof of the sequent $\gamma_1\rightarrow\gamma_2,\Gamma_0\Ra \gamma_0\rightarrow\gamma_1$, where the rule \rule{$\rightarrow\rightarrow$LIL} deconstructs the implication $(\gamma_0\rightarrow\gamma_1)\rightarrow\gamma_2$, rule \rule{Ctr} contracts $\gamma_1\rightarrow\gamma_2$ and Lemma \ref{lem:inv_rules} is the invertibility of the rule \rule{$\rightarrow$R}.

\begin{scriptsize}
\begin{center}
\AxiomC{$(\gamma_0\rightarrow\gamma_1)\rightarrow\gamma_2,\Gamma_0\Ra\varphi_0\rightarrow\varphi_1$}
\dashedLine
\RightLabel{$\scriptstyle\rule{$\rightarrow\rightarrow$LIL}$}
\UnaryInfC{$\gamma_0,\gamma_1\rightarrow\gamma_2,\gamma_1\rightarrow\gamma_2,\Gamma_0\Ra\varphi_0\rightarrow\varphi_1$}
\dashedLine
\RightLabel{$\scriptstyle\rule{Ctr}$}
\UnaryInfC{$\gamma_0,\gamma_1\rightarrow\gamma_2,\Gamma_0\Ra\varphi_0\rightarrow\varphi_1$}
\AxiomC{$\varphi_0\rightarrow\varphi_1,\gamma_1\rightarrow\gamma_2,\Gamma_0\Ra\gamma_0\rightarrow\gamma_1$}
\dashedLine
\RightLabel{$\scriptstyle\text{Lem.\ref{lem:inv_rules}}$}
\UnaryInfC{$\varphi_0\rightarrow\varphi_1,\gamma_0,\gamma_1\rightarrow\gamma_2,\Gamma_0\Ra\gamma_1$}
\dashedLine
\RightLabel{$\scriptstyle\text{SIH}$}
\BinaryInfC{$\gamma_0,\gamma_1\rightarrow\gamma_2,\Gamma_0\Ra\gamma_1$}
\RightLabel{$\scriptstyle\rule{$\rightarrow$R}$}
\UnaryInfC{$\gamma_1\rightarrow\gamma_2,\Gamma_0\Ra\gamma_0\rightarrow\gamma_1$}
\DisplayProof
\end{center}
\end{scriptsize}
The crucial point here is to see that the use of SIH is justified, in other words, that $\Theta(\gamma_0,\gamma_1\rightarrow\gamma_2,\Gamma_0\Ra\gamma_1)\slex\Theta((\gamma_0\rightarrow\gamma_1)\rightarrow\gamma_2,\Gamma_0\Ra\chi)$. This is the case as the rule applications \rule{$\rightarrow\rightarrow$L} and \rule{$\rightarrow$R} entail $\Theta(\gamma_0,\gamma_1\rightarrow\gamma_2,\Gamma_0\Ra\gamma_1)$ $\slex\Theta(\gamma_1\rightarrow\gamma_2,\Gamma_0\Ra\gamma_0\rightarrow\gamma_1)\slex\Theta((\gamma_0\rightarrow\gamma_1)\rightarrow\gamma_2,\Gamma_0\Ra\chi)$ by Lemma \ref{lem:meas-decr-rule}, hence $\Theta(\gamma_0,\gamma_1\rightarrow\gamma_2,\Gamma_0\Ra\gamma_1)\slex\Theta((\gamma_0\rightarrow\gamma_1)\rightarrow\gamma_2,\Gamma_0\Ra\chi)$ by transitivity of~$\slex$. So, we are done. Note that the created cut could not be justified by usual induction on height, as the admissibility of \rule{$\rightarrow\rightarrow$LIL} is not height-preserving.

\noindent\textbf{(V-f)} If $r_2$ is \rule{$\Box\!\rightarrow$L} with a principal formula different from the cut formula, then it must have the following form where $\Box\gamma_0\rightarrow\gamma_1,\Phi,\Box\Gamma_0=\Gamma$.
\begin{center}
\AxiomC{$\varphi_0\rightarrow\varphi_1,\gamma_1,\Phi,\Gamma_0,\Box\gamma_0\Ra\gamma_0$}
\AxiomC{$\gamma_1,\varphi_0\rightarrow\varphi_1,\Phi,\Box\Gamma_0\Ra\chi$}
\RightLabel{$\scriptstyle\rule{$\Box\!\rightarrow$L}$}
\BinaryInfC{$\varphi_0\rightarrow\varphi_1,\Box\gamma_0\rightarrow\gamma_1,\Phi,\Box\Gamma_0\Ra\chi$}
\DisplayProof
\end{center}
Thus, we have that $\Gamma\Ra\chi$ and $\Gamma\Ra\varphi_0\rightarrow\varphi_1$ are respectively of the form $\Box\gamma_0\rightarrow\gamma_1,\Phi,\Box\Gamma_0\Ra\chi$ and $\Box\gamma_0\rightarrow\gamma_1,\Phi,\Box\Gamma_0\Ra\varphi_0\rightarrow\varphi_1$. Using the admissible rule \rule{$\Box\!\rightarrow$LIR} on the latter we obtain a proof of $\gamma_1,\Phi,\Box\Gamma_0\Ra\varphi_0\rightarrow\varphi_1$. Then, we proceed as follows by combining the proof $\pi$ second-below with the first one.
\vspace{-0.8cm}

\begin{center}
\AxiomC{$\pi$}
\noLine
\UnaryInfC{$\gamma_1,\Phi,\Gamma_0,\Box\gamma_0\Ra\gamma_0$}

\AxiomC{$\gamma_1,\Phi,\Box\Gamma_0\Ra\varphi_0\rightarrow\varphi_1$}
\AxiomC{$\gamma_1,\varphi_0\rightarrow\varphi_1,\Phi,\Box\Gamma_0\Ra\chi$}
\dashedLine
\RightLabel{$\scriptstyle\text{SIH}$}
\BinaryInfC{$\gamma_1,\Phi,\Box\Gamma_0\Ra\chi$}
\RightLabel{$\scriptstyle\rule{$\Box\!\rightarrow$L}$}
\BinaryInfC{$\Box\gamma_0\rightarrow\gamma_1,\Phi,\Box\Gamma_0\Ra\chi$}
\DisplayProof
\end{center}

\begin{center}
\AxiomC{$\varphi_0,\Box\gamma_0\rightarrow\gamma_1,\Phi,\Box\Gamma_0\Ra\varphi_1$}
\dashedLine
\RightLabel{$\scriptstyle\rule{Wkn}$}
\UnaryInfC{$\varphi_0,\Box\gamma_0\rightarrow\gamma_1,\Phi,\Box\Gamma_0,\Box\gamma_0\Ra\varphi_1$}
\dashedLine
\RightLabel{$\scriptstyle\rule{$\Box\!\rightarrow$LIR}$}
\UnaryInfC{$\varphi_0,\gamma_1,\Phi,\Box\Gamma_0,\Box\gamma_0\Ra\varphi_1$}
\dashedLine
\RightLabel{$\scriptstyle\rule{$\boxtimes$}$}
\UnaryInfC{$\varphi_0,\gamma_1,\Phi,\Gamma_0,\Box\gamma_0\Ra\varphi_1$}
\RightLabel{$\scriptstyle\rule{$\imp$R}$}
\UnaryInfC{$\gamma_1,\Phi,\Gamma_0,\Box\gamma_0\Ra\varphi_0\rightarrow\varphi_1$}
\AxiomC{$\varphi_0\rightarrow\varphi_1,\gamma_1,\Phi,\Gamma_0,\Box\gamma_0\Ra\gamma_0$}
\dashedLine
\RightLabel{$\scriptstyle\text{SIH}$}
\BinaryInfC{$\gamma_1,\Phi,\Gamma_0,\Box\gamma_0\Ra\gamma_0$}
\DisplayProof
\end{center}
Note that both uses of SIH are justified here, as the last rule in the first \prf~is an instance of \rule{$\Box\!\rightarrow$L} hence $\Theta(\gamma_1,\Phi,\Box\Gamma_0\Ra\chi)\slex\Theta(\Box\gamma_0\rightarrow\gamma_1,\Phi,\Box\Gamma_0\Ra\chi)$ and $\Theta(\gamma_1,\Phi,\Gamma_0,\Box\gamma_0\Ra\gamma_0)\slex\Theta(\Box\gamma_0\rightarrow\gamma_1,\Phi,\Box\Gamma_0\Ra\chi)$ by Lemma \ref{lem:meas-decr-rule}.

\noindent \textbf{(VII)} $\bold{r_1=}\textbf{\rule{$\Box\!\rightarrow$L}}$: Then $r_1$ is as follows, where $\Box\gamma_0\rightarrow\gamma_1,\Phi,\Box\Gamma_0=\Gamma$.
\begin{center}
\AxiomC{$\gamma_1,\Phi,\Gamma_0,\Box\gamma_0\Ra\gamma_0$}
\AxiomC{$\gamma_1,\Phi,\Box\Gamma_0\Ra\varphi$}
\RightLabel{$\scriptstyle\rule{$\Box\!\rightarrow$L}$}
\BinaryInfC{$\Box\gamma_0\rightarrow\gamma_1,\Phi,\Box\Gamma_0\Ra\varphi$}
\DisplayProof
\end{center}
Thus, the sequents $\Gamma\Ra\chi$ and $\varphi,\Gamma\Ra\chi$ are of the form $\Box\gamma_0\rightarrow\gamma_1,\Phi,\Box\Gamma_0\Ra\chi$ and $\varphi,\Box\gamma_0\rightarrow\gamma_1,\Phi,\Box\Gamma_0\Ra\chi$, respectively. Then, we proceed as follows.

\begin{center}
\AxiomC{$\gamma_1,\Phi,\Gamma_0,\Box\gamma_0\Ra\gamma_0$}
\AxiomC{$\gamma_1,\Phi,\Box\Gamma_0\Ra\varphi$}
\AxiomC{$\varphi,\Box\gamma_0\rightarrow\gamma_1,\Phi,\Box\Gamma_0\Ra\chi$}
\dashedLine
\RightLabel{$\scriptstyle\rule{$\Box\!\rightarrow$LIR}$}
\UnaryInfC{$\varphi,\gamma_1,\Phi,\Box\Gamma_0\Ra\chi$}
\dashedLine
\RightLabel{$\scriptstyle\text{SIH}$}
\BinaryInfC{$\gamma_1,\Phi,\Box\Gamma_0\Ra\chi$}
\RightLabel{$\scriptstyle\rule{$\Box\!\rightarrow$L}$}
\BinaryInfC{$\Box\gamma_0\rightarrow\gamma_1,\Phi,\Box\Gamma_0\Ra\chi$}
\DisplayProof
\end{center}
Note that the use of SIH is justified, as the last rule in this \prf~gives us $\Theta(\gamma_1,\Phi,\Box\Gamma_0\Ra\chi)\slex\Theta(\Box\gamma_0\rightarrow\gamma_1,\Phi,\Box\Gamma_0\Ra\chi)$ by Lemma \ref{lem:meas-decr-rule}.

\noindent \textbf{(VIII)} $\bold{r_1=}\textbf{\rule{SLtR}}$: Then $\varphi$ is the diagonal formula in $r_1$:
\begin{center}
\AxiomC{$\Phi,\Gamma_0,\Box\varphi_0\Ra\varphi_0$}
\RightLabel{$\scriptstyle\rule{SLtR}$}
\UnaryInfC{$\Phi,\Box\Gamma_0\Ra\Box\varphi_0$}
\DisplayProof
\end{center}
where $\varphi=\Box\varphi_0$ and $\Phi,\Box\Gamma_0=\Gamma$. Thus, we have that $\Gamma\Ra\chi$ and $\varphi,\Gamma\Ra\chi$ are respectively of the form $\Phi,\Box\Gamma_0\Ra\chi$ and  $\Box\varphi_0,\Phi,\Box\Gamma_0\Ra\chi$. We now consider $r_2$. 

\noindent\textbf{(VIII-b)} If $r_2$ is \rule{$\Box\!\rightarrow$L} it is of the following form, where $\Phi=\Box\gamma_0\rightarrow\gamma_1,\Phi_0$.
\begin{center}
\AxiomC{$\gamma_1,\Phi_0,\Box\gamma_0,\varphi_0,\Gamma_0\Ra\gamma_0$}
\AxiomC{$\gamma_1,\Phi_0,\Box\varphi_0,\Box\Gamma_0\Ra\chi$}
\RightLabel{$\scriptstyle\rule{$\Box\!\rightarrow$L}$}
\BinaryInfC{$\Box\gamma_0\rightarrow\gamma_1,\Phi_0,\Box\varphi_0,\Box\Gamma_0\Ra\chi$}
\DisplayProof
\end{center}
We proceed as follows.
\begin{scriptsize}
\begin{center}
\AxiomC{$\pi_0$}
\noLine
\UnaryInfC{$\gamma_1,\Phi_0,\Gamma_0,\Box\gamma_0\Ra\gamma_0$}

\AxiomC{$\Box\gamma_0\imp\gamma_1,\Phi_0,\Box\Gamma_0\Ra\Box\varphi_0$}
\dashedLine
\RightLabel{$\scriptstyle\rule{$\Box\!\rightarrow$LIR}$}
\UnaryInfC{$\gamma_1,\Phi_0,\Box\Gamma_0\Ra\Box\varphi_0$}
\AxiomC{$\gamma_1,\Phi_0,\Box\varphi_0,\Box\Gamma_0\Ra\chi$}
\dashedLine
\RightLabel{$\scriptstyle\text{SIH}$}
\BinaryInfC{$\gamma_1,\Phi_0,\Box\Gamma_0\Ra\chi$}

\RightLabel{$\scriptstyle\rule{$\Box\!\rightarrow$L}$}
\BinaryInfC{$\Box\gamma_0\rightarrow\gamma_1,\Phi_0,\Box\Gamma_0\Ra\chi$}
\DisplayProof
\end{center}
\end{scriptsize}
where $\pi_0$ is the first proof given below, which depends $\pi_1$, the second one:
\begin{center}
\AxiomC{$\Box\gamma_0\imp\gamma_1,\Phi_0,\Box\Gamma_0\Ra\Box\varphi_0$}
\dashedLine
\RightLabel{$\scriptstyle\rule{$\boxtimes$}$}
\UnaryInfC{$\Box\gamma_0\imp\gamma_1,\Phi_0,\Gamma_0\Ra\Box\varphi_0$}
\dashedLine
\RightLabel{$\scriptstyle\rule{Wkn}$}
\UnaryInfC{$\Box\gamma_0\imp\gamma_1,\Phi_0,\Gamma_0,\Box\gamma_0\Ra\Box\varphi_0$}
\dashedLine
\RightLabel{$\scriptstyle\rule{$\Box\!\rightarrow$LIR}$}
\UnaryInfC{$\gamma_1,\Phi_0,\Gamma_0,\Box\gamma_0\Ra\Box\varphi_0$}

\AxiomC{$\pi_1$}
\noLine
\UnaryInfC{$\gamma_1,\Phi_0,\Box\gamma_0,\Box\varphi_0,\Gamma_0\Ra\gamma_0$}

\dashedLine
\RightLabel{$\scriptstyle\text{SIH}$}
\BinaryInfC{$\gamma_1,\Phi_0,\Gamma_0,\Box\gamma_0\Ra\gamma_0$}
\DisplayProof
\end{center}
\begin{center}
\AxiomC{$\Box\gamma_0\imp\gamma_1,\Phi_0,\Box\varphi_0,\Gamma_0\Ra\varphi_0$}
\dashedLine
\RightLabel{$\scriptstyle\rule{Wkn}$}
\UnaryInfC{$\Box\gamma_0\imp\gamma_1,\Phi_0,\Box\gamma_0,\Box\varphi_0,\Gamma_0\Ra\varphi_0$}
\dashedLine
\RightLabel{$\scriptstyle\rule{$\Box\!\rightarrow$LIR}$}
\UnaryInfC{$\gamma_1,\Phi_0,\Box\gamma_0,\Box\varphi_0,\Gamma_0\Ra\varphi_0$}
\AxiomC{$\varphi_0,\gamma_1,\Phi_0,\Box\gamma_0,\Gamma_0\Ra\gamma_0$}
\dashedLine
\RightLabel{$\scriptstyle\rule{Wkn}$}
\UnaryInfC{$\varphi_0,\gamma_1,\Phi_0,\Box\gamma_0,\Box\varphi_0,\Gamma_0\Ra\gamma_0$}
\dashedLine
\RightLabel{$\scriptstyle\text{PIH}$}
\BinaryInfC{$\gamma_1,\Phi_0,\Box\gamma_0,\Box\varphi_0,\Gamma_0\Ra\gamma_0$}
\DisplayProof
\end{center}
Note that both uses of SIH are justified here as the rule application \rule{$\Box\!\rightarrow$L} entails $\Theta(\gamma_1,\Phi_0,\Gamma_0,\Box\gamma_0\Ra\gamma_0)\slex\Theta(\Box\gamma_0\rightarrow\gamma_1,\Phi_0,\Box\Gamma_0\Ra\chi)$ and we have $\Theta(\gamma_1,\Phi_0,\Box\Gamma_0\Ra\chi)\slex\Theta(\Box\gamma_0\rightarrow\gamma_1,\Phi_0,\Box\Gamma_0\Ra\chi)$ by Lemma \ref{lem:meas-decr-rule}.

\noindent\textbf{(VIII-c)} If $r_2$ is \rule{SLtR}, then it is of the following form where $\chi=\Box\chi_0$.
\begin{center}
\AxiomC{$\Phi,\varphi_0,\Gamma_0,\Box\chi_0\Ra\chi_0$}
\RightLabel{$\scriptstyle\rule{SLtR}$}
\UnaryInfC{$\Phi,\Box\varphi_0,\Box\Gamma_0\Ra\Box\chi_0$}
\DisplayProof
\end{center}
We proceed as follows.
\vspace{-0.4cm}

\begin{center}
\begin{scriptsize}
\AxiomC{$\Phi,\Gamma_0,\Box\varphi_0\Ra\varphi_0$}
\RightLabel{$\scriptstyle\rule{SLtR}$}
\UnaryInfC{$\Phi,\Box\Gamma_0\Ra\Box\varphi_0$}
\dashedLine
\RightLabel{$\scriptstyle\rule{$\boxtimes$}$}
\UnaryInfC{$\Phi,\Gamma_0\Ra\Box\varphi_0$}
\dashedLine
\RightLabel{$\scriptstyle\rule{Wkn}$}
\UnaryInfC{$\Phi,\Gamma_0,\Box\chi_0\Ra\Box\varphi_0$}

\AxiomC{$\Box\varphi_0,\Phi,\Gamma_0\Ra\varphi_0$}
\dashedLine
\RightLabel{$\scriptstyle\rule{Wkn}$}
\UnaryInfC{$\Box\varphi_0,\Phi,\Gamma_0,\Box\chi_0\Ra\varphi_0$}
\AxiomC{$\varphi_0,\Phi,\Gamma_0,\Box\chi_0\Ra\chi_0$}
\dashedLine
\RightLabel{$\scriptstyle\rule{Wkn}$}
\UnaryInfC{$\varphi_0,\Box\varphi_0,\Phi,\Gamma_0,\Box\chi_0\Ra\chi_0$}
\dashedLine
\RightLabel{$\scriptstyle\text{PIH}$}
\BinaryInfC{$\Box\varphi_0,\Phi,\Gamma_0,\Box\chi_0\Ra\chi_0$}

\dashedLine
\RightLabel{$\scriptstyle\text{SIH}$}
\BinaryInfC{$\Phi,\Gamma_0,\Box\chi_0\Ra\chi_0$}
\RightLabel{$\scriptstyle\rule{SLtR}$}
\UnaryInfC{$\Phi,\Box\Gamma_0\Ra\Box\chi_0$}
\DisplayProof
\end{scriptsize}
\end{center}
The use of SIH is justified because the last rule in this \prf~ensures that $\Theta(\Phi,\Gamma_0,\Box\chi_0\Ra\chi_0)\slex\Theta(\Phi,\Box\Gamma_0\Ra\Box\chi_0)$ by Lemma \ref{lem:meas-decr-rule}.
\qed
\end{proof}

The attentive reader may have noticed that our proof technique requires the use of additive, and not multiplicative, cuts. Indeed, the use of SIH relies on the decrease of the measure $\Theta$, which is notably ensured by the upward application of any rule of the calculus. 
More generally, in the proof of admissibility if the cut we initially consider has $\Gamma\Ra\chi$ as conclusion, then we can justify a cut with conclusion $\Gamma'\Ra\chi'$ using SIH as long as we have a chain $r_0,\dots,r_n$ of application of rules of $\sys{G4iSLt}$ of the following form. 
$$
\infer[\scriptstyle r_0]{\Gamma\Ra\chi}{
	\ldots
	&
	\infer[\scriptstyle r_n]{\vdots}{
		\ldots
		&
		\Gamma'\Ra\chi'
		&
		\ldots
	}
	&
	\ldots
}
$$
However, the contraction rule does not ensure the decrease of the measure $\Theta$ from conclusion to premise: it is not the case that $\Theta(\Gamma,\varphi,\varphi\Ra\chi)\slex\Theta(\Gamma,\varphi\Ra\chi)$. So, this prevents us from allowing one of $r_0,\dots,r_n$ above to be $\rule{Ctr}$. This is where multiplicative cuts are problematic: they most often use the contraction rule as follows, where $\Gamma\Ra\chi$ is the conclusion of the initial cut and $\Gamma',\Gamma''\Ra\chi'$ is the conclusion of the cut we want to justify through SIH. 
\begin{center}
\AxiomC{$\Gamma'\Ra\varphi$}
\AxiomC{$\varphi,\Gamma''\Ra\chi'$}
\dashedLine
\RightLabel{$\scriptstyle\text{SIH}$}
\BinaryInfC{$\Gamma',\Gamma''\Ra\chi'$}
\noLine
\UnaryInfC{$\vdots$}
\dashedLine
\RightLabel{$\scriptstyle\rule{Ctr}^*$}
\UnaryInfC{$\Gamma\Ra\chi$}
\DisplayProof
\end{center}
Unfortunately, the presence of the contraction rule above $\Gamma\Ra\chi$ disallows us from using SIH on $\Gamma',\Gamma''\Ra\chi'$, as we are not ensured that the measure decreased between the two sequents. So, our proof technique prohibited us from using multiplicative cuts, forcing us to use additive ones. This observation was already made by Gor\'{e} and Shillito \cite{GorShi22}.

Using our purely syntactic proof of cut-admissibility above, we easily obtain a cut-elimination procedure for the calculus $\sys{G4iSLt}$ extended with \rule{cut}, by simply repetitively eliminating topmost cuts first. To effectively prove this statement in Coq we explicitly encode the additive cut rule as follows:
\begin{center}
\AxiomC{\lstinline!($\Gamma$0++$\Gamma$1 *\ $\varphi$)!}
\AxiomC{\lstinline!($\Gamma$0++$\varphi$::$\Gamma$1 *\ $\chi$)!}
\BinaryInfC{\lstinline!($\Gamma$0++$\Gamma$1 *\ $\chi$)!}
\DisplayProof
\end{center}

We encode the calculus $\sys{G4iSLt}+\rule{cut}$ as \lstinline!G4iSLt_cut_rules!, i.e.~\lstinline!G4iSLt_rules! enhanced with \rule{cut}.
Finally, we turn to the elimination of additive cuts:

\begin{theorem}
The additive cut rule is eliminable from $\sys{G4iSLt}+\rule{cut}$.
\end{theorem}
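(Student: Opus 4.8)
The plan is to obtain this as an immediate corollary of the admissibility of additive-cut (Theorem~\ref{thm-CE-G4iSLt}), via a structural induction on derivations in $\sys{G4iSLt}+\rule{cut}$; this is the precise form of ``repetitively eliminating topmost cuts first''. Concretely, I would prove: for every derivation $d$ of a sequent $S$ in $\sys{G4iSLt}+\rule{cut}$, the sequent $S$ satisfies \lstinline{G4iSLt_prv}, i.e.~has a cut-free proof. In the Coq development this is the statement that provability in the system \lstinline!G4iSLt_cut_rules! implies \lstinline{G4iSLt_prv}, and it is proved by the induction principle generated for that inductive predicate.

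The induction is on the last rule $r$ of $d$. If $r$ is one of the thirteen rules of $\sys{G4iSLt}$ itself, the induction hypothesis gives a cut-free proof in $\sys{G4iSLt}$ of each premise of $r$, and re-applying $r$ to these cut-free proofs produces a cut-free proof of $S$. If $r$ is the additive \rule{cut} rule, then $S=\Gamma\Ra\chi$ and its premises are $\Gamma\Ra\varphi$ and $\varphi,\Gamma\Ra\chi$; the induction hypothesis supplies cut-free proofs of both, and feeding these two proofs to Theorem~\ref{thm-CE-G4iSLt} yields a cut-free proof of $\Gamma\Ra\chi$. Since these cases are exhaustive, the induction closes, and every derivation of $\sys{G4iSLt}+\rule{cut}$ is turned into a cut-free derivation of the same sequent --- that is, \rule{cut} is eliminable.

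I do not anticipate a genuine mathematical difficulty: all the hard work --- in particular the handling of the diagonal formula through the secondary induction on the measure $\Theta$ --- has already been carried out in the proof of Theorem~\ref{thm-CE-G4iSLt}. The only care needed is formalisation bookkeeping: the induction has to be performed over the \lstinline{Type}-valued derivation object rather than a \lstinline{Prop}, so that the cut-free proof is genuinely constructed and the result remains compatible with Coq's extraction; and \lstinline!G4iSLt_cut_rules! must be arranged so that the premises of \rule{cut} are recognised as immediate subderivations by the generated recursor, exactly as for the other rules. Modulo these routine points, the theorem follows directly from cut-admissibility.
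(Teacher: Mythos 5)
Your proposal matches the paper's argument: the theorem is obtained as a direct corollary of Theorem~\ref{thm-CE-G4iSLt} by structural induction on the derivation in $\sys{G4iSLt}+\rule{cut}$, re-applying non-cut rules and invoking cut-admissibility on the (already cut-free) premises at each cut, which is exactly the ``repetitively eliminating topmost cuts first'' procedure, formalised over \lstinline{Type}-valued derivations for extraction. No gaps.
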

\vspace{-0.2cm}
\begin{lstlisting}[texcl]
Theorem G4iSLt_cut_elimination : forall s,
 (G4iSLt_cut_prv s) -> (G4iSLt_prv s).
\end{lstlisting}

The above theorem shows that any proof in $\sys{G4iSLt}+\rule{cut}$ of a sequent, i.e.~\lstinline!G4iSLt_cut_prv s!, can be transformed into a proof in $\sys{G4iSLt}$ of the same sequent.
As this theorem is in fact a constructive function based on \lstinline{Type}, we can use the extraction feature of Coq and obtain a cut-eliminating Haskell program.

\section{Conclusion}

This paper introduces a sequent calculus for $\sys{iSL}$, denoted $\sys{G4iSLt}$. It is an improvement over the sequent calculus $\sys{G4iSL}$ from~\cite{GieIem23}, because backward proof search for $\sys{G4iSLt}$ is strongly terminating (instead of weakly terminating) shown via a new well-founded measure, and cut-elimination is proved directly (instead of indirectly via an equivalent calculus based on $\sys{G3i}$~\cite{GieIem23}). All our results are formalised in Coq in a constructive way. In turn, Coq's extraction mechanism can generate a Haskell program for the cut-elimination procedure for $\sys{G4iSLt}$.

One of the reasons to develop $\sys{G4iSLt}$ is to use its strongly terminating proof search to investigate uniform interpolation, a strengthening of Craig interpolation, in the setting of intuitionistic provability logics.
Typically, calculi with good (weakly or strongly) terminating proof search form good grounds for constructive proofs of uniform interpolation (see e.g.~\cite{Pitts92,Bil06,Iemhoff17,AkbJal18,AkbIemJal21b,GieJalKuz21,AfsGraMen21,AkbIemJal22}).

We also suggest to develop a countermodel construction for $\sys{G4iSLt}$ similarly to the one for $\sys{G4iSL}$ in \cite{GieIem23}. Furthermore, as $\sys{iSL}$ is an intuitionistic modal logic only defined with $\Box$, there is the question how it can be extended by $\Diamond$ operators. It is clear from the literature of intuitionistic modal logics that several choices can be made (e.g. \cite{FischerServi77,Simpson94PhD,Bellin2001,Mendler2005,WolterZakharyaschev97}), so we leave this for future work.

\subsubsection{Acknowledgements}
Iris van der Giessen would like to thank Sonia Marin and Marianna Girlando for an interesting discussion on the subtle choice of rules in proof systems. 
We would like to thank the anonymous reviewers for their helpful comments and suggestions.
Van der Giessen is supported by a UKRI Future Leaders Fellowship, ‘Structure vs Invariants in Proofs’, project reference MR/S035540/1. 
Rosalie Iemhoff is supported  by the Netherlands Organisation for Scientific Research under grant 639.073.807 and by the EU H2020-MSCA-RISE-2020 Project 101007627. 
Rajeev Gor\'e is supported by FWF project P 33548 and the National Centre for Research and Development, Poland (NCBR), and the Luxembourg National Research Fund (FNR), under the PolLux/FNR-CORE project STV (POLLUX-VII/1/2019).

\bibliographystyle{splncs04}

\begin{thebibliography}{10}
\providecommand{\url}[1]{\texttt{#1}}
\providecommand{\urlprefix}{URL }
\providecommand{\doi}[1]{https://doi.org/#1}






\bibitem{DAbDawGor21}
{\noopsort{Abrera}}{D'Abrera}, C., Dawson, J.E., Gor{\'{e}}, R.: A Formally Verified Cut-Elimination Procedure for Linear Nested Sequents for Tense Logic. In: Das, A., Negri, S. (eds.) Automated Reasoning with Analytic Tableaux and Related Methods - 30th International Conference, {TABLEAUX} 2021, Birmingham, UK, September 6-9, 2021, Proceedings. Lecture Notes in Computer Science, vol 12842, pp. 281--298. (2021)
\url{https://doi.org/10.1007/978-3-030-86059-2\_17}

\bibitem{AfsGraMen21}
Afshari, B., Leigh, G.E., Men{\'e}ndez Turata, G.: 
Uniform interpolation from cyclic proofs: The case of modal mu-calculus. 
In: Das, A., Negri, S. (eds.) Automated Reasoning with Analytic Tableaux and Related Methods - 30th International Conference, {TABLEAUX} 2021, Birmingham, UK, September 6-9, 2021, Proceedings. Lecture Notes in Computer Science, vol 12842, pp. 335--353. (2021)
\url{https://doi.org/10.1007/978-3-030-86059-2\_20}

\bibitem{AkbIemJal21b}
Akbar Tabatabai, A., Iemhoff, R., Jalali, R.: Uniform Lyndon Interpolation for Basic Non-normal Modal Logics. In: Silva, A., Wassermann, R., de Queiroz, R. (eds.) Logic, Language, Information, and Computation. WoLLIC 2021. Lecture Notes in Computer Science, vol 13038, pp. 287--301 (2021)
\url{https://doi.org/10.1007/978-3-030-88853-4_18}

\bibitem{AkbIemJal22}
Akbar Tabatabai, A., Iemhoff, R., Jalali, R.: Uniform Lyndon Interpolation for Intuitionistic Monotone Modal Logic. In: Advances in Modal Logic 14, papers from the fourteenth conference on "Advances in Modal Logic", 22-26 August 2022 (2022) \url{https://doi.org/10.48550/arXiv.2208.04607}

\bibitem{AkbJal18}
Akbar~Tabatabai, A., Jalali, R.: Universal proof theory: Semi-analytic rules
  and uniform interpolation. E-print 1808.06258, arXiv (2018).
  \doi{10.48550/arXiv.1808.06258}

\bibitem{ArdMoj18}
Ardeshir, M., Mojtahedi, M.: The {$\Sigma_1$}-provability logic of {HA}. Annals of Pure and Applied Logic \textbf{169}(10), 997-1043 (2018)
\url{https://doi.org/10.1016/j.apal.2018.05.001}

\bibitem{Bellin2001}
Bellin, G. and {\noopsort{Paiva}}{de Paiva}, V. and Ritter, E.: Extended {C}urry-{H}oward correspondence for a basic constructive modal logic. In: Proceedings of Methods for Modalities, vol. 2 (2001) 
\url{http://profs.sci.univr.it/~bellin/m4m.ps}

\bibitem{Bil06}
B{\'{i}}lkov{\'{a}}, M.: Interpolation in Modal Logics. Ph.D. thesis, Univerzita Karlova, Prague (2006) 
\url{https://dspace.cuni.cz/handle/20.500.11956/15732}	

\bibitem{Brighton16}
Brighton, J.: Cut Elimination for {GLS} Using the Terminability of its Regress Process. Journal of Philosophical Logic \textbf{45}(2), 147-153 (2016)
\url{https://doi.org/10.1007/s10992-015-9368-4}

\bibitem{ChaLimRei19}
Chaudhuri, K., Lima, L., Reis, G.: Formalized meta-theory of sequent calculi for linear logics. Theor. Comput. Sci. \textbf{781}, 24-38 (2019) \url{https://doi.org/10.1016/j.tcs.2019.02.023}

\bibitem{DawGor10}
Dawson, J.E., Gor{\'{e}}, R.: Generic Methods for Formalising Sequent Calculi Applied to Provability Logic. In: Ferm{\"{u}}ller, C.G., Voronkov, A. (eds.) Logic for Programming, Artificial Intelligence, and Reasoning - 17th International Conference, LPAR-17, Yogyakarta, Indonesia, October 10-15, 2010. Proceedings. Lecture Notes in Computer Science, vol. 6397, pp. 263--277. Springer (2010), \url{https://doi.org/10.1007/978-3-642-16242-8\_19}

\bibitem{DerZoh79}
Dershowitz, N., Manna, Z.: Proving Termination with Multiset Orderings. Commun. ACM \textbf{22}(8), 465-476 (1979), 
\url{https://doi.org/10.1145/359138.359142}

\bibitem{Dyc92}
Dyckhoff, R.: Contraction-Free Sequent Calculi for Intuitionistic Logic. The Journal of Symbolic Logic \textbf{57}(3), 795-807 (1992)
\url{https://doi.org/10.2307/2275431}

\bibitem{Dyc16}
Dyckhoff, R.: Intuitionistic Decision Procedures Since Gentzen. In: Kahle, R., Strahm, T., Studer, T. (eds.) Advances in Proof Theory. Progress in Computer Science and Applied Logic. vol. 28, pp. 245–267 (2016)
\url{https://doi.org/10.1007/978-3-319-29198-7_6}

\bibitem{DycNeg00}
Dyckhoff, R., Negri, S.: Admissibility of Structural Rules for Contraction-Free Systems of Intuitionistic Logic. The Journal of Symbolic Logic \textbf{65}(4), 1499–1518 (2000),
\url{https://doi.org/10.2307/2695061}

\bibitem{Esakia06}
Esakia, L.: The modalized Heyting calculus: A conservative modal extension of the Intuitionistic Logic. Journal of Applied Non-Classical Logics 16, 349–366 (2006),
\url{https://doi.org/10.3166/jancl.16.349-366}

\bibitem{FeGoo23}
F{\'{e}}r{\'{e}}e, H., van Gool, S.: Formalizing and Computing Propositional Quantifiers. In: Krebbers, R., Traytel, D., Pientka, B., Zdancewic, S. (eds.) Proceedings of the 12th ACM SIGPLAN International Conference on Certified Programs and Proofs, CPP 2023, Boston, MA, USA, January 16-17, 2023. pp. 148–158. ACM (2023).
\url{https://doi.org/10.1145/3573105.3575668},

\bibitem{FischerServi77}
Fischer-Servi, G.: On Modal Logic with an Intuitionistic Base. Studia Logica \textbf{36}, 141–149 (1977).
\doi{10.1007/bf02121259}

\bibitem{Gat22}
Gattinger, M.: A Verified Proof of Craig Interpolation for Basic Modal Logic via Tableaux in Lean (2022), \url{https://malv.in/2022/AiML2022-basic-modal-interpolation-lean.pdf}, short presentation

\bibitem{Gie22}
{\noopsort{Giessen}}{van der Giessen}, I.: Uniform Interpolation and Admissible Rules: Proof-theoretic investigations into (intuitionistic) modal logics. Ph.D. thesis, Utrecht University, Utrecht (2022), \url{https://dspace.library.uu.nl/handle/1874/423244}

\bibitem{Gie23}
{\noopsort{Giessen}}{van der Giessen}, I.: Admissible rules for six intuitionistic modal logics. Annals of Pure and Applied Logic 174(4) (2023),
\url{https://doi.org/10.1016/j.apal.2022.103233}

\bibitem{GieIem21}
{\noopsort{Giessen}}{van der Giessen}, I., Iemhoff, R.: Sequent Calculi for Intuitionistic {G}{\"{o}}del{-}{L}{\"{o}}b Logic. Notre Dame Journal of Formal Logic \textbf{62}(2), 221 – 246 (2021),
\url{https://doi.org/10.1215/00294527-2021-0011}

\bibitem{GieIem23}
{\noopsort{Giessen}}{van der Giessen}, I., Iemhoff, R.: Proof Theory for Intuitionistic Strong {L}\"ob Logic (2023), \url{https://doi.org/10.48550/arXiv.2011.10383}, (To appear in \textit{Special Volume of the Workshop Proofs!}, Paris 2017)

\bibitem{GieJalKuz21}
{\noopsort{Giessen}}{van der Giessen}, I., Jalali, R., Kuznets, R.: Uniform Interpolation via Nested Sequents. In: WoLLIC 2021, LNCS, vol. 13038, pp. 337–354. Springer (2021),
\url{https://doi.org/10.1007/978-3-030-88853-4\_21}

\bibitem{GoreRamanayake12}
Gor{\'{e}}, R., Ramanayake, R.: Valentini’s Cut-Elimination for Provability Logic resolved. Rev. Symb. Log. \textbf{5}(2), 212–238 (2012),
\url{https://doi.org/10.1017/S1755020311000323}

\bibitem{GorRamShi21}
Gor{\'{e}}, R., Ramanayake, R., Shillito, I.: Cut-Elimination for Provability Logic by Terminating Proof-Search: Formalised and Deconstructed Using Coq. In: Das, A., Negri, S. (eds.) Automated Reasoning with Analytic Tableaux and Related Methods - 30th International Conference, TABLEAUX 2021, Birmingham, UK, September 6-9, 2021, Proceedings. Lecture Notes in Computer Science, vol. 12842, pp. 299–313. Springer (2021), 
\url{https://doi.org/10.1007/978-3-030-86059-2\_18}

\bibitem{GorShi20}
Gor{\'{e}}, R., Shillito, I.: Bi-Intuitionistic Logics: A New Instance of an Old Problem. In: Advances in Modal Logic 13, papers from the thirteenth conference on "Advances in Modal Logic", held online, 24-28 August 2020. pp. 269–288 (2020), 
\url{http://www.aiml.net/volumes/volume13/Gore-Shillito.pdf}

\bibitem{GorShi22}
Gor{\'{e}}, R., Shillito, I.: Direct elimination of additive-cuts in GL4ip: verified and extracted. In: Advances in Modal Logic 14, papers from the fourteenth conference on "Advances in Modal Logic", 22-26 August 2022 (2022)

\bibitem{HakNeg12}
Hakli, R., Negri, S.: Does the Deduction Theorem Fail for Modal Logic? Synthese \textbf{187}(3), 849–867 (2012),
\url{https://doi.org/10.1007/s11229-011-9905-9}

\bibitem{Iemhoff17}
Iemhoff, R.: Uniform interpolation and the existence of sequent calculi. Annals of Pure and Applied Logic 170(11), 1–37 (2019), 
\url{https://doi.org/10.1016/j.apal.2019.05.008}

\bibitem{Iemhoff22}
Iemhoff, R.: The G4i analogue of a G3i calculus. Studia Logica \textbf{110}, 1493–1506 (2022),
\url{https://doi.org/10.1007/s11225-022-10008-3}

\bibitem{KuznetsovMuravitsky86}
Kuznetsov, A.V., Muravitsky, A.Y.: On Superintuitionistic Logics as Fragments of Proof Logic Extensions. Studia Logica \textbf{45}(1), 77–99 (1986), 
\url{http://www.jstor.org/stable/20015249}

\bibitem{LarMat19}
Larchey-Wendling, D., Matthes, R.: Certification of Breadth-First Algorithms by Extraction. In: Hutton, G. (ed.) Mathematics of Program Construction - 13th International Conference, MPC 2019, Porto, Portugal, October 7-9, 2019, Proceedings. Lecture Notes in Computer Science, vol. 11825, pp. 45–75. Springer (2019), 
\url{https://doi.org/10.1007/978-3-030-33636-3\_3}

\bibitem{Lit14}
Litak, T.: Constructive Modalities with Provability Smack. In: Bezhanishvili, G. (ed.) Leo Esakia on Duality in Modal and Intuitionistic Logics. pp. 187–216. Springer Netherlands, Dordrecht (2014),
\url{https://doi.org/10.1007/978-94-017-8860-1_8}

\bibitem{Mendler2005}
Mendler, M., de Paiva, V.: Constructive CK for contexts. Context Representation and Reasoning (CRR-2005) \textbf{13} (2005), 
\url{https://www.cs.bham.ac.uk/~vdp/publications/ck-paper2.pdf}

\bibitem{Mojtahedi22}
Mojtahedi, M.: On provability logic of HA. Eprint 2206.00445, arXiv (2022).
\url{https://doi.org/10.48550/arXiv.2206.00445}

\bibitem{Muravitsky14}
Muravitsky, A.: Logic KM: A biography. In: Bezhanishvili, G. (ed.) Leo Esakia on Duality in Modal and Intuitionistic Logics, Outstanding Contributions to Logic, vol. 4, pp. 155–185. Springer Netherlands, Dordrecht (2014),  
\url{https://doi.org/10.1007/978-94-017-8860-1_7}

\bibitem{Pau86}
Paulson, L.C.: Constructing recursion operators in intuitionistic type theory. Journal of Symbolic Computation \textbf{2}(4), 325–355 (1986),
\url{https://doi.org/10.1016/S0747-7171(86)80002-5}

\bibitem{Bro22}
Perini Brogi, C.: Investigations of Proof Theory and Automated Reasoning for Non-classical Logics. Ph.D. thesis, Universit\`{a} degli Studi di Genova, Genova (2022)

\bibitem{Pitts92}
Pitts, A.M.: On an interpretation of second order quantification in first order intuitionistic propositional logic. The Journal of Symbolic Logic \textbf{57}(1), 33–52 (1992),
\url{https://doi.org/10.2307/2275175}

\bibitem{SambinValentini82}
Sambin, G., Valentini, S.: The modal logic of provability: the sequential approach. Journal of Philosophical Logic \textbf{11}, 311–342 (1982),
\url{https://doi.org/10.1007/BF00293433}

\bibitem{Shi23}
Shillito, I.: New Foundations for the Proof Theory of Bi-Intuitionistic and Provability Logics Mechanized in Coq. Ph.D. thesis, Australian National University, Canberra (2023), 
\url{http://www.proquest.com/docview/2812065824?pq-origsite=gscholar&fromopenview=true}

\bibitem{Simpson94PhD}
Simpson, A.K.: The Proof Theory and Semantics of Intuitionistic Modal Logic. Ph.D. thesis, University of Edinburgh (1994), 
\url{https://era.ed.ac.uk/handle/1842/407}

\bibitem{Valentini83}
Valentini, S.: The modal logic of provability: Cut-elimination. Journal of Philosophical Logic \textbf{12}, 471–476 (1983),
\url{https://doi.org/10.1007/BF00249262}

\bibitem{Visser82}
Visser, A.: On the completeness principle: A study of provability in Heyting’s Arithmetic and extensions. Annals of Mathematical Logic \textbf{22}(3), 263–295 (1982).
\doi{10.1016/0003-4843(82)90024-9},

\bibitem{VisserZoethout18}
Visser, A., Zoethout, J.: Provability logic and the completeness principle. Annals of Pure and Applied Logic \textbf{170}(6), 718–753 (2019), 
\url{https://doi.org/10.1016/j.apal.2019.02.001}

\bibitem{WolterZakharyaschev97}
Wolter, F., Zakharyaschev, M.: On the relation between intuitionistic and classical modal logics. Algebra and Logic \textbf{36}, 73–92 (1997).
\doi{10.1007/BF02672476}

\end{thebibliography}
\providecommand{\noopsort}[1]{}

\newpage
\appendix

\section*{Appendix} \label{appendix}

\begin{proof}[of Lemma \ref{lem:meas-decr-rule}]
We showcase the difficult case: the rule \rule{SLtR}. Our goal is to show the following.
$$\Theta(\Phi,\Gamma,\Box\varphi\Ra\varphi)\;\;\slex\;\;\Theta(\Phi,\Box\Gamma\Ra\Box\varphi)$$
We first unbox the antecedent of each sequent before generating the list of natural numbers, thus obtaining the multisets $\Phi\uplus\Gamma\uplus\{\varphi,\varphi\}$ and $\Phi\uplus\Gamma\uplus\{\Box\varphi\}$. One can see that the occurrences of formulae in $\Phi$ and $\Gamma$ are not going to be decisive in comparing the corresponding lists of natural numbers, as they appear in both multisets. So, we can truncate our multisets to $\{\varphi,\varphi\}$ and $\{\Box\varphi\}$. But we have already seen above that the lists generated by these two multisets are such that the list generated from $\{\varphi,\varphi\}$ is smaller on $\slex$ than the one of $\{\Box\varphi\}$, as the single occurrence of the formula $\Box\varphi$ of weight $n$ is replaced by two occurences of the formula $\varphi$ of lesser weight $n-1$.
\qed
\end{proof}

\begin{proof}[of Theorem \ref{thm-CE-G4iSLt}]
\noindent \textbf{(I)} $\bold{r_1=}\textbf{\rule{IdP}}:$ then we have that $\varphi=p$. Consequently, $\Gamma\Ra\chi$ is of the form $\Gamma_0,p\Ra\chi$. Also, the conclusion of $r_2$ is of the form $\Gamma_0,p,p\Ra\chi$. We can apply the contraction rule \rule{Ctr} to obtain a \prf~of $\Gamma_0,p\Ra\chi$. 

\noindent \textbf{(II)} $\bold{r_1=}\textbf{\rule{$\bot$L}}$: Then $r_1$ must have the following form where $\Gamma_0,\bot=\Gamma$:
\begin{center}
\AxiomC{}
\RightLabel{$\scriptstyle\rule{$\bot$L}$}
\UnaryInfC{$\Gamma_0,\bot\Ra\varphi$}
\DisplayProof
\end{center}
Thus, we have that the sequent $\Gamma\Ra\chi$ is of the form $\Gamma_0,\bot\Ra\chi$, and is an instance of \rule{$\bot$L}. So we are done.

\noindent \textbf{(III)} $\bold{r_1\in}\{\textbf{\rule{$\land$L}, \rule{$\lor$L}, \rule{$p\!\rightarrow$L}, \rule{$\land\!\rightarrow$L}, \rule{$\lor\!\rightarrow$L}}\}$: In all these cases, the cut formula is not principal in $r_1$ so it is preserved in the premise. Given that the rules considered are invertible, we simply take the conclusion of $r_2$ and use the corresponding invertibility lemma to destruct the principal formula of $r_1$. Then, we use SIH to cut on $\varphi$ in the obtained premises, and apply $r_1$ on the conclusion of the cut. As an example, we consider the case of \rule{$\land$L}, where $r_1$ is of the following form and where $\Gamma_0,\psi\land\delta=\Gamma$:
\begin{center}
\AxiomC{$\Gamma_0,\psi,\delta\Ra\varphi$}
\RightLabel{$\scriptstyle\rule{$\land$L}$}
\UnaryInfC{$\Gamma_0,\psi\land\delta\Ra\varphi$}
\DisplayProof
\end{center}
Thus, we have that the sequents $\Gamma\Ra\chi$ and $\varphi,\Gamma\Ra\chi$ are respectively of the form $\Gamma_0,\psi\land\delta\Ra\chi$ and $\varphi,\Gamma_0,\psi\land\delta\Ra\chi$. Using the invertibility of \rule{$\land$L}, proven in Lemma \ref{lem:inv_rules}, on $\varphi,\Gamma_0,\psi\land\delta\Ra\chi$ we obtain a proof of the sequent $\varphi,\Gamma_0,\psi,\delta\Ra\chi$. Then, we proceed as follows.
\begin{center}
\AxiomC{$\Gamma_0,\psi,\delta\Ra\varphi$}
\AxiomC{$\varphi,\Gamma_0,\psi,\delta\Ra\chi$}
\dashedLine
\RightLabel{$\scriptstyle\text{SIH}$}
\BinaryInfC{$\Gamma_0,\psi,\delta\Ra\chi$}
\RightLabel{$\scriptstyle\rule{$\land$L}$}
\UnaryInfC{$\Gamma_0,\psi\land\delta\Ra\chi$}
\DisplayProof
\end{center}
Note that the use of SIH is justified here as the last rule in this \prf~entails
$\Theta(\Gamma_0,\psi,\delta\Ra\chi)\slex\Theta(\Gamma_0,\psi\land\delta\Ra\chi)$ by Lemma \ref{lem:meas-decr-rule}.

\noindent \textbf{(IV)} $\bold{r_1\in}\{\textbf{\rule{$\land$R}, \rule{$\lor$R$_1$}, \rule{$\lor$R$_2$}}\}$: In all these cases, the cut formula is principal in $r_1$ so it is deconstructed in the premise. Given that the corresponding left rules are invertible, we simply take the conclusion of $r_2$ and use the adequate invertibility lemma to destruct the cut formula. Then, we use PIH to cut on the obtained subformulae. As an example, we consider the case of \rule{$\lor$R$_1$}, where $r_1$ is of the following form and where $\psi\lor\delta=\varphi$:
\begin{center}
\AxiomC{$\Gamma\Ra\psi$}
\RightLabel{$\scriptstyle\rule{$\lor$R$_1$}$}
\UnaryInfC{$\Gamma\Ra\psi\lor\delta$}
\DisplayProof
\end{center}
Thus, we have that the sequents $\varphi,\Gamma\Ra\chi$ is of the form $\psi\lor\delta,\Gamma\Ra\chi$. Using the invertibility of \rule{$\lor$L}, proven in Lemma \ref{lem:inv_rules}, on $\psi\lor\delta,\Gamma\Ra\chi$ we obtain proofs of the sequents $\psi,\Gamma\Ra\chi$ and $\delta,\Gamma\Ra\chi$. Then, we proceed as follows.
\begin{center}
\AxiomC{$\Gamma\Ra\psi$}
\AxiomC{$\psi,\Gamma\Ra\chi$}
\dashedLine
\RightLabel{$\scriptstyle\text{PIH}$}
\BinaryInfC{$\Gamma\Ra\chi$}
\DisplayProof
\end{center}

\noindent\textbf{(V-a)} If $r_2$ is \rule{$\rightarrow$R} then it must have the following form where $\chi_0\rightarrow\chi_1=\chi$:
\begin{center}
\AxiomC{$\varphi_0\rightarrow\varphi_1,\chi_0,\Gamma\Ra\chi_1$}
\RightLabel{$\scriptstyle\rule{$\rightarrow$R}$}
\UnaryInfC{$\varphi_0\rightarrow\varphi_1,\Gamma\Ra\chi_0\rightarrow\chi_1$}
\DisplayProof
\end{center}
We can use the weakening rule \rule{Wkn} on the
\prf~of $\Gamma\Ra\varphi_0\rightarrow\varphi_1$
to get a \prf~of $\chi_0,\Gamma\Ra\varphi_0\rightarrow\varphi_1$
of no greater height. Proceed as follows.

\begin{center}
\AxiomC{$\chi_0,\Gamma\Ra\varphi_0\rightarrow\varphi_1$}
\AxiomC{$\varphi_0\rightarrow\varphi_1,\chi_0,\Gamma\Ra\chi_1$}
\dashedLine
\RightLabel{$\scriptstyle\text{SIH}$}
\BinaryInfC{$\chi_0,\Gamma\Ra\chi_1$}
\RightLabel{$\scriptstyle\rule{$\rightarrow$R}$}
\UnaryInfC{$\Gamma\Ra\chi_0\rightarrow\chi_1$}
\DisplayProof
\end{center}
Note the use of the additive cut to contract on $\chi_0$.
The use of SIH is justified here as the last rule in this \prf~gives us $\Theta(\chi_0,\Gamma\Ra\chi_1)\slex\Theta(\Gamma\Ra\chi_0\rightarrow\chi_1)$ by Lemma \ref{lem:meas-decr-rule}.

\noindent\textbf{(V-b)} If $r_2$ is \rule{$\land$R} or \rule{$\lor$R$_i$}, then we simply use cut with the premise(s) of $r_2$ and the conclusion of $r_1$ using SIH. As an example, we consider the case of \rule{$\lor$R$_1$}, where $r_2$ has the form
where $\chi = \chi_0\lor\chi_1$:
\begin{center}
\AxiomC{$\varphi_0\rightarrow\varphi_1,\Gamma\Ra\chi_0$}
\RightLabel{$\scriptstyle\rule{$\lor$R$_1$}$}
\UnaryInfC{$\varphi_0\rightarrow\varphi_1,\Gamma\Ra\chi_0\lor\chi_1$}
\DisplayProof
\end{center}
Then we proceed as follows:
\begin{center}
\AxiomC{$\Gamma\Ra\varphi_0\rightarrow\varphi_1$}
\AxiomC{$\varphi_0\rightarrow\varphi_1,\Gamma\Ra\chi_0$}
\dashedLine
\RightLabel{$\scriptstyle\text{SIH}$}
\BinaryInfC{$\Gamma\Ra\chi_0$}
\RightLabel{$\scriptstyle\rule{$\lor$R$_1$}$}
\UnaryInfC{$\Gamma\Ra\chi_0\lor\chi_1$}
\DisplayProof
\end{center}

\noindent\textbf{(V-c)} If $r_2$ is \rule{$\land$L}, \rule{$\lor$L}, \rule{$p\!\rightarrow$L}, \rule{$\lor\!\rightarrow$L} or \rule{$\land\!\rightarrow$L} where the cut formula is not principal in $r_2$, then we use the inversion lemma for $r_2$ on the conclusion of $r_1$, and then apply cut using SIH. As an example, we consider the case of \rule{$\land\!\rightarrow$L}, where $r_2$ has the form
where $(\gamma_0\land\gamma_1)\rightarrow\gamma_2,\Gamma_0=\Gamma$:
\begin{center}
\AxiomC{$\varphi_0\rightarrow\varphi_1,\gamma_0\rightarrow(\gamma_1\rightarrow\gamma_2),\Gamma_0\Ra\chi$}
\RightLabel{$\scriptstyle\rule{$\land\!\rightarrow$L}$}
\UnaryInfC{$\varphi_0\rightarrow\varphi_1,(\gamma_0\land\gamma_1)\rightarrow\gamma_2,\Gamma_0\Ra\chi$}
\DisplayProof
\end{center}
Thus, we have that the sequents $\Gamma\Ra\chi$ and $\Gamma\Ra\varphi_0\rightarrow\varphi_1$ are respectively of the form $(\gamma_0\land\gamma_1)\rightarrow\gamma_2,\Gamma_0\Ra\chi$ and  $(\gamma_0\land\gamma_1)\rightarrow\gamma_2,\Gamma_0\Ra\varphi_0\rightarrow\varphi_1$.
Using the invertibility of \rule{$\land\!\rightarrow$L}, proven in Lemma \ref{lem:inv_rules}, on $(\gamma_0\land\gamma_1)\rightarrow\gamma_2,\Gamma_0\Ra\varphi_0\rightarrow\varphi_1$
we obtain a proof of the sequent $\gamma_0\rightarrow(\gamma_1\rightarrow\gamma_2),\Gamma_0\Ra\varphi_0\rightarrow\varphi_1$. Then, we proceed as follows.
\begin{center}
\AxiomC{$\gamma_0\rightarrow(\gamma_1\rightarrow\gamma_2),\Gamma_0\Ra\varphi_0\rightarrow\varphi_1$}
\AxiomC{$\varphi_0\rightarrow\varphi_1,\gamma_0\rightarrow(\gamma_1\rightarrow\gamma_2),\Gamma_0\Ra\chi$}
\dashedLine
\RightLabel{$\scriptstyle\text{SIH}$}
\BinaryInfC{$\gamma_0\rightarrow(\gamma_1\rightarrow\gamma_2),\Gamma_0\Ra\chi$}
\RightLabel{$\scriptstyle\rule{$\land\!\rightarrow$L}$}
\UnaryInfC{$(\gamma_0\land\gamma_1)\rightarrow\gamma_2,\Gamma_0\Ra\chi$}
\DisplayProof
\end{center}

\noindent\textbf{(V-e)} If $r_2$ is \rule{$\Box\!\rightarrow$L} with the cut formula as principal formula, then it must have the following form, where $\Phi,\Box\Gamma_0=\Gamma$ and $\varphi_0=\Box\varphi_2$.
\begin{center}
\AxiomC{$\varphi_1,\Phi,\Gamma_0,\Box\varphi_2\Ra\varphi_2$}
\AxiomC{$\varphi_1,\Phi,\Box\Gamma_0\Ra\chi$}
\RightLabel{$\scriptstyle\rule{$\Box\!\rightarrow$L}$}
\BinaryInfC{$\Box\varphi_2\rightarrow\varphi_1,\Phi,\Box\Gamma_0\Ra\chi$}
\DisplayProof
\end{center}
Thus, we have that the sequents $\Gamma\Ra\chi$ and $\varphi_0,\Gamma\Ra\varphi_1$ are respectively of the form $\Phi,\Box\Gamma_0\Ra\chi$ and $\Box\varphi_2,\Phi,\Box\Gamma_0\Ra\varphi_1$. Then, we proceed as follows by combining the proof $\pi$ second-below with the first one.
\begin{center}
\AxiomC{$\pi$}
\noLine
\UnaryInfC{$\Phi,\Box\Gamma_0\Ra\varphi_1$}
\AxiomC{$\varphi_1,\Phi,\Box\Gamma_0\Ra\chi$}
\dashedLine
\RightLabel{$\scriptstyle\text{PIH}$}
\BinaryInfC{$\Phi,\Box\Gamma_0\Ra\chi$}
\DisplayProof
\end{center}

\begin{center}
\AxiomC{$\Phi,\Box\Gamma_0,\Box\varphi_2\Ra\varphi_1$}
\dashedLine
\RightLabel{$\scriptstyle\rule{$\boxtimes$}$}
\UnaryInfC{$\Phi,\Gamma_0,\Box\varphi_2\Ra\varphi_1$}
\AxiomC{$\varphi_1,\Phi,\Gamma_0,\Box\varphi_2\Ra\varphi_2$}
\dashedLine
\RightLabel{$\scriptstyle\text{PIH}$}
\BinaryInfC{$\Phi,\Gamma_0,\Box\varphi_2\Ra\varphi_2$}
\RightLabel{$\scriptstyle\rule{SLtR}$}
\UnaryInfC{$\Phi,\Box\Gamma_0\Ra\Box\varphi_2$}
\AxiomC{$\Box\varphi_2,\Phi,\Box\Gamma_0\Ra\varphi_1$}
\dashedLine
\RightLabel{$\scriptstyle\text{PIH}$}
\BinaryInfC{$\Phi,\Box\Gamma_0\Ra\varphi_1$}
\DisplayProof
\end{center}

\noindent\textbf{(V-g)} If $r_2$ is \rule{SLtR} then it must have the following form.
\begin{center}
\AxiomC{$\Phi,\varphi_0\rightarrow\varphi_1,\Gamma_0,\Box\chi_0\Ra\chi_0$}
\RightLabel{$\scriptstyle\rule{SLtR}$}
\UnaryInfC{$\Phi,\varphi_0\rightarrow\varphi_1,\Box\Gamma_0\Ra\Box\chi_0$}
\DisplayProof
\end{center}
where $\Phi,\Box\Gamma_0=\Gamma$ and $\Box\chi_0=\chi$. In that case, note that the sequents $\Gamma\Ra\chi$ and $\Gamma\Ra\varphi_0\rightarrow\varphi_1$ are of the respective form $\Phi,\Box\Gamma_0\Ra\Box\chi_0$ and $\Phi,\Box\Gamma_0\Ra\varphi_0\rightarrow\varphi_1$. Then, we proceed as follows.
\begin{center}
\AxiomC{$\Phi,\Box\Gamma_0\Ra\varphi_0\rightarrow\varphi_1$}
\dashedLine
\RightLabel{$\scriptstyle\rule{$\boxtimes$}$}
\UnaryInfC{$\Phi,\Gamma_0\Ra\varphi_0\rightarrow\varphi_1$}
\dashedLine
\RightLabel{$\scriptstyle\rule{Wkn}$}
\UnaryInfC{$\Phi,\Gamma_0,\Box\chi_0\Ra\varphi_0\rightarrow\varphi_1$}
\AxiomC{$\varphi_0\rightarrow\varphi_1,\Phi,\Gamma_0,\Box\chi_0\Ra\chi_0$}
\dashedLine
\RightLabel{$\scriptstyle\text{SIH}$}
\BinaryInfC{$\Phi,\Gamma_0,\Box\chi_0\Ra\chi_0$}
\RightLabel{$\scriptstyle\rule{SLtR}$}
\UnaryInfC{$\Phi,\Box\Gamma_0\Ra\Box\chi_0$}
\DisplayProof
\end{center}
Note that the use of SIH is justified here as the last rule in this \prf~gives us $\Theta(\Phi,\Gamma_0,\Box\chi_0\Ra\chi_0)\slex\Theta(\Phi,\Box\Gamma_0\Ra\Box\chi_0)$ by Lemma \ref{lem:meas-decr-rule}.

\noindent \textbf{(VI)} $\bold{r_1=}\textbf{\rule{$\rightarrow\rightarrow$L}}$: Then $r_1$ is as follows, where $(\gamma_0\rightarrow\gamma_1)\rightarrow\gamma_2,\Gamma_0=\Gamma$.
\begin{center}
\AxiomC{$\gamma_1\rightarrow\gamma_2,\Gamma_0\Ra\gamma_0\rightarrow\gamma_1$}
\AxiomC{$\gamma_2,\Gamma_0\Ra\varphi$}
\RightLabel{$\scriptstyle\rule{$\rightarrow\rightarrow$L}$}
\BinaryInfC{$(\gamma_0\rightarrow\gamma_1)\rightarrow\gamma_2,\Gamma_0\Ra\varphi$}
\DisplayProof
\end{center}
Thus, we have that the sequents $\Gamma\Ra\chi$ and $\varphi,\Gamma\Ra\chi$ are respectively of the form $(\gamma_0\rightarrow\gamma_1)\rightarrow\gamma_2,\Gamma_0\Ra\chi$ and  $\varphi,(\gamma_0\rightarrow\gamma_1)\rightarrow\gamma_2,\Gamma_0\Ra\chi$. Using the admissible rule \rule{$\rightarrow\rightarrow$LIR} on $\varphi,(\gamma_0\rightarrow\gamma_1)\rightarrow\gamma_2,\Gamma_0\Ra\chi$ we obtain a proof of the sequent $\varphi,\gamma_2,\Gamma_0\Ra\chi$. Then, we proceed as follows.
\begin{center}
\AxiomC{$\gamma_1\rightarrow\gamma_2,\Gamma_0\Ra\gamma_0\rightarrow\gamma_1$}
\AxiomC{$\gamma_2,\Gamma_0\Ra\varphi$}
\AxiomC{$\varphi,\gamma_2,\Gamma_0\Ra\chi$}
\dashedLine
\RightLabel{$\scriptstyle\text{SIH}$}
\BinaryInfC{$\gamma_2,\Gamma_0\Ra\chi$}
\RightLabel{$\scriptstyle\rule{$\rightarrow\rightarrow$L}$}
\BinaryInfC{$(\gamma_0\rightarrow\gamma_1)\rightarrow\gamma_2,\Gamma_0\Ra\chi$}
\DisplayProof
\end{center}
Note that the use of SIH is justified as the last rule in this \prf~gives us $\Theta(\gamma_2,\Gamma_0\Ra\chi)\slex\Theta((\gamma_0\rightarrow\gamma_1)\rightarrow\gamma_2,\Gamma_0\Ra\chi)$ by Lemma \ref{lem:meas-decr-rule}.

\noindent\textbf{(VIII-a)} If $r_2$ is one of \rule{IdP}, \rule{$\bot$L}, \rule{$\land$R}, \rule{$\land$L}, \rule{$\lor$R$_1$}, \rule{$\lor$R$_2$}, \rule{$\lor$L}, \rule{$\rightarrow$R}, \rule{$p\!\rightarrow$L}, \rule{$\land\!\rightarrow$L}, \rule{$\lor\!\rightarrow$L} and \rule{$\rightarrow\rightarrow$L} then proceed similarly to the cases (I), (II), (III), (IV) and (VI), where the cut-formula is not principal in the rules considered by using SIH.
\qed
\end{proof}

\end{document}